\DeclareMathAlphabet{\mathbfit}{OML}{cmm}{b}{it}
\def\I{\mathbfit I}
\def\cx{\mathbfit c}
\def\cto{\,---\,}
\def\phi{\varphi}
\newtheorem{theorem}{Theorem}
\newtheorem{lemma}[theorem]{Lemma}
\newtheorem{claim}[theorem]{Claim}
\newtheorem{fact}[theorem]{Fact}
\newtheorem{definition}{Definition}
\newcommand\eqdef{\stackrel{\mathrm{def}}{=}}
\def\|{{\,{|}\,}}
\def\@biblabel#1{#1.}\makeatother
\title{\bf Secret sharing on large girth graphs}
\author{L\'aszl\'o Csirmaz, P\'eter Ligeti
\thanks{Central European University, E\"otv\"os Lor\'and University and
R\'enyi Institute, Budapest}
\thanks{e-mail:~csirmaz@renyi.hu, turul@cs.elte.hu}}
\date{}
\begin{document}
\maketitle

\begin{abstract}
\noindent
We investigate graph based secret sharing schemes and its information ratio,
also called complexity, measuring the maximal amount of information the
vertices has to store.  It was conjectured that in large girth graphs, where
the interaction between far away nodes is restricted to a single path, this
ratio is bounded.  This conjecture was supported by several result, most
notably by a result of Csirmaz and Ligeti \cite{csl} saying that the
complexity of graphs with girth at least six and no neighboring high degree
vertices is strictly below 2.  In this paper we refute the above conjecture.
First, a family of $d$-regular graphs is defined iteratively such that the
complexity of these graphs is the largest possible $(d+1)/2$ allowed by
Stinson's bound \cite{stinson}.  This part extends earlier results of van
Dijk \cite{dijk} and Blundo et al \cite{bssv}, and uses the so-called
entropy method.  Second, using combinatorial arguments, we show that this
family contains graphs with arbitrary large girth.  In particular, we
obtain the following purely combinatorial result, which might be interesting
on its own: there are $d$-regular graphs with arbitrary large girth such
that any fractional edge-cover by stars (or by complete multipartite graphs)
must cover some vertex $(d+1)/2$ times.

\end{abstract}

\section{Introduction}\label{sec:intro}

\subsection{Motivation and notion}

{\em Secret sharing} is a method for distributing some secret information
between a set of participants by giving them partial knowledge of the secret
in a way that only pre-described coalitions will be able to reconstruct the
original secret from their respective parts.  More precisely, let $\cal{P}$
denote the set of {\em participants}.  A family of subsets $\cal{A}\subset$
$2^{\cal{P}}$ is called {\em access structure} if it is monotone increasing,
i.e.  if $A \in \cal{A}$ and $A\subseteq B$ then $B\in\cal{A}.$ The elements
of $\cal{A}$ are called \emph{qualified subsets}.

\begin{definition}\label{perfect_def} A  {\em perfect secret sharing}
$\cal{S}$ realizing the access structure $\cal{A}$ is a collection of random
variables $\xi_i$ for every $i\in \cal{P}$ and $\xi_s$ with a joint
distribution such that
\begin{itemize}
\item[(i)]  if $A\in \cal{A}$, then $\{\xi_i: i\in A\}$ determines $\xi_s;$
 \item[(ii)] if $A\notin \cal{A}$, then $\{\xi_i: i\in A\}$ is independent of
$\xi_s.$
\end{itemize}
\end{definition}

In this paper we focus on the special case when every minimal element of
$\cal{A}$ has two elements; these are the so-called {\em graph-based
schemes}.  The participants are the vertices of a graph $G=(V,E)$, and a set
of participants is qualified if there is an edge $e\in E$ with endpoints in
this set.

Traditionally the efficiency of a scheme is measured by the maximal amount
of information some participant has to store compared to the size of the
secret. This amount is called {\em information ratio}, or {\em complexity}
of the scheme. The {\em complexity} of a graph $G$ is the best lower bound
on the complexity of schemes realizing $G$, and can be defined formally in
following way:

\begin{definition}
The {\em complexity} (or information ratio) of the graph $G=(V,E)$ is
$$ \cx(G)=   \inf_{{\mathcal{S}}}\, \max_{v\in V} \,
\frac{\mathop{\hbox{\bf H}}(\xi_v)}{\mathop{\hbox{\bf H}}(\xi_s)},
$$
where the infimum is taken over all perfect secret sharing schemes
$\mathcal{S}$ realizing the access structure defined by the graph $G$,
and $\mathop{\hbox{\bf H}}(\xi)$ is the Shannon entropy of the random
variable $\xi$.
\end{definition}

One of the most interesting and challenging problems in this topic is to
determine the information ratio exactly for particular graphs or families of
graphs \cite{beimel}.  The main tool is to prove general or specific
estimations for the information ratio and find constructions which realize
these bounds.

\subsection{Related works}

For a comprehensible account on secret sharing, on its relevance, please
consult the survey \cite{beimel}.  In terms of the number of vertices, the
best lower bound for $\cx(G)$ established so far is logarithmic in the number
of vertices \cite{bssv,csirmcube,dijk}; while it is known that $\cx(G)$ is
always $\le c\cdot n/\log n$ for some small constant $c$ \cite{cslt,ep}.

The only known method for proving lower bounds is the entropy technique, we
explain it in more detail in Section \ref{sec:method}.  On the other hand,
every graph based secret sharing scheme yields an upper bound.  One
fundamental tool in constructing such schemes is Stinson's decomposition
technique.  While it is more general, we will use the following special case
only.

\begin{theorem}[Stinson's Decomposition Theorem, \cite{stinson}]
Suppose that the edges of the graph $G$ can be fractionally covered
by the edges of complete multipartite graphs such
that every vertex is covered with weight at most $k$. Then $\cx(G)\le k$.
\qed
\end{theorem}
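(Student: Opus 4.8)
The plan is to produce, directly from the fractional cover, a single perfect scheme realizing $G$ of complexity at most $k$, obtained by gluing ideal schemes for the complete multipartite pieces. The first step is to record that every complete multipartite graph admits an ideal scheme, i.e.\ one of complexity $1$: if the parts are $P_1,\dots,P_m$ and the secret $y$ is drawn uniformly from a sufficiently large finite field $\mathbb F_q$, pick a uniformly random polynomial $g$ of degree at most $1$ over $\mathbb F_q$ with $g(0)=y$, fix distinct nonzero $\beta_1,\dots,\beta_m\in\mathbb F_q$, and hand every vertex of $P_r$ the value $g(\beta_r)$. Two vertices in distinct parts see $g$ at two distinct points and so recover $y=g(0)$, whereas vertices confined to a single part $P_r$ see only $g(\beta_r)$, which is uniform and independent of $y$; hence the scheme is perfect and every share is a single field element.

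Next I would pass from the fractional cover to an integral one. I read the hypothesis as asserting the existence of complete multipartite subgraphs $G_1,\dots,G_t$ of $G$ (so $E(G_j)\subseteq E(G)$) and nonnegative weights $w_j$ with $\sum_{j:\,e\in G_j}w_j\ge 1$ for every edge $e$ and $\sum_{j:\,v\in G_j}w_j\le k$ for every vertex $v$; since the associated linear program has rational data it has a rational optimum, so I may assume the $w_j$ rational. Writing $w_j=a_j/N$ over a common denominator $N$ and taking $a_j$ copies of each $G_j$ yields a multiset $H_1,\dots,H_M$ ($M=\sum_j a_j$) of complete multipartite subgraphs of $G$ in which every edge of $G$ lies in at least $N$ of the $H_i$ while every vertex lies in at most $kN$ of them.

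The composition is then the following. Over a sufficiently large field $\mathbb F_q$, identify the secret with the coefficient vector $(s_1,\dots,s_N)\in\mathbb F_q^{\,N}$ of the polynomial $f(x)=s_1+s_2x+\cdots+s_Nx^{N-1}$, fix distinct $\alpha_1,\dots,\alpha_M\in\mathbb F_q$, and put $y_i\eqdef f(\alpha_i)$; then any $N$ of the $y_i$ determine $f$ and hence the secret, while each single $y_i$ is uniform on $\mathbb F_q$. For each $i$ run, with its own fresh and mutually independent randomness, the ideal scheme above for $H_i$ with secret $y_i$, and let $\xi_v$ be the tuple of shares $v$ receives over the $H_i$ containing it. Then $\mathbf H(\xi_v)\le\sum_{i:\,v\in H_i}\log q\le kN\log q$ while $\mathbf H(\xi_s)=N\log q$, so the information ratio at every vertex is at most $k$. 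For correctness, the two endpoints of an edge $e$ of $G$ recover $y_i$ for each of the at least $N$ indices $i$ with $e\in H_i$, hence recover $f$ and the secret. For privacy, a non-edge $\{u,v\}$ of $G$ is a non-edge of every $H_i$, so by perfectness of the $i$-th subscheme the shares that $\{u,v\}$ obtains from it are independent of $y_i$; because the subschemes use independent randomness, the entire collection of shares seen by $\{u,v\}$ is independent of $(y_1,\dots,y_M)$ and therefore of the secret. This gives $\cx(G)\le k$.

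I expect the step needing the most care to be the privacy argument: one has to upgrade the per-subscheme statements ``the shares of $\{u,v\}$ in the $i$-th scheme are independent of $y_i$'' to joint independence from the whole \emph{correlated} tuple $(y_1,\dots,y_M)$, which is exactly where the mutual independence of the randomness of the different subschemes enters, via a short conditioning argument, together with the fact that each $H_i$ genuinely is a subgraph of $G$ so that non-edges remain non-edges. The remaining ingredients---the ideal scheme for complete multipartite graphs, the rational-to-integral reduction, and the entropy bookkeeping---are routine, and it is precisely the ratio $kN\log q$ to $N\log q$ that yields the bound $k$ rather than the cruder $\max_v|\{i:\,v\in H_i\}|$ one would obtain by sharing one and the same field element in all of the subschemes.
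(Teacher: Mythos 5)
The paper states this theorem with a reference to \cite{stinson} and no proof, so there is nothing internal to compare against; your argument is a correct, self-contained proof and is essentially the standard decomposition construction from that reference (ideal Shamir-type schemes for complete multipartite pieces, a rational-to-integral scaling of the fractional cover, and a Reed--Solomon splitting of the secret into $M$ pieces any $N$ of which suffice). The privacy step you flag is handled correctly by the conditioning argument via the independent randomness of the subschemes, and the entropy bookkeeping gives exactly the ratio $k$.
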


Using Stinson's decomposition theorem for all star (spanned) subgraphs of a
graph $G$ with weight 0.5, it follows immediately that the complexity of $G$
is at most $(d+1)/2$ where $d$ is the maximal degree.  This upper bound is
known as the {\em Stinson's bound}.

To attack the problem whether Stinson's bound is tight, in \cite{dijk} van
Dijk defined a family of $d$-regular graphs and proved that Stinson's bound
is asymptotically tight for that family.  Later Blundo et al \cite{bssv}
showed that the complexity of this family actually matches Stinson's bound. 
In Section \ref{const_sec} we generalize their result by showing that
Stinson's bound is tight for an even larger family of $d$-regular graphs.

The graphs in van Dijk's family have girth 6 and ratio $(d+1)/2$.  The
complexity of other (sporadic) infinite graph families has been determined
as well.  Some examples are
\begin{itemize}
\item the edge graph of the $d$-dimensional hypercube which has
complexity $d/2$ and girth $4$ \cite{csirmcube};
\item
trees have complexity $<2$ and girth 0, for exact values consult \cite{cst};
\item
graphs with girth $>5$ and no adjacent vertices of degree at least three have
complexity strictly below $2$ \cite{csl}.
\end{itemize}
These results supported the intuition that high complexity requires high
connectivity and bounded girth, as the interaction between nodes diminishes
exponentially as their distance grows.  We show that this intuition was
wrong by constructing graphs with arbitrary large complexity {\em and}
arbitrary large girth at the same time.  The result is achieved in two
steps.  First, a family $\mathcal G_d$ of $d$-regular graphs is defined
which extends the above mentioned van Dijk's graph family.  Using the
entropy method, explained in Section \ref{sec:method}, we show that all
graphs in $\mathcal G_d$ have the maximal complexity allowed by Stinson's
bound, namely $(d+1)/2$.  Second, in Section \ref{graph_sec}, using purely
combinatorial arguments, we show that $\mathcal G_d$ contains graphs with
arbitrary large girth.

Any fractional cover of the edges by spanned stars automatically gives a
secret sharing scheme with complexity equal to the maximal cover weight on
the vertices.  Consequently any fractional edge-cover by stars of a graph
from $\mathcal G_d$ covers some vertex at least $(d+1)/2$ times, even the
ones which have large girth.  This is in sharp contrast to the easy fact
that a tree has a star-cover where each vertex is covered at most twice.

\subsection{Organization}

The {\em entropy method} is a universal, but not complete, tool to establish
lower bounds on the information ratio of arbitrary access structures, see
\cite{beimel,bssv6,bssv,csirm_lower,csirmcube,dijk}.  For the convenience of
the interested reader, the method is explained in Section \ref{sec:method}. 
Specific lemmas tailored for establishing lower bounds on graph-based access
structures are stated and proved in Section \ref{sec:general_lemmas}. 
Section \ref{const_sec} contains the definition of the family $\mathcal G_d$
of $d$-regular graphs along with the proof that they have complexity
$(d+1)/2$.  This graph family extends that of van Dijk from \cite{dijk} and
\cite{bssv}.  The existence of large girth graphs in $\mathcal G_d$ is
proved in Section \ref{graph_sec}.  The construction and the results in this
section are purely combinatorial, and may be interesting independently. 
Finally, Section \ref{sec:conclusion} concludes the paper and lists some
open problems.

\section{The entropy method}\label{sec:method}

We focus on the special case of graph based structures. The method can be
summarized as follows.  Consider any function $f$ assigning non-negative
real numbers to subsets of the vertices of $G$ (the normalized entropy
function) with the following properties:
\begin{enumerate}
\item $f$ is monotone and submodular; moreover $f(\emptyset)=0$;
\item if $A\subset B$, $A$ is independent and $B$ is not, then
$f(A)+1\le f(B)$ (strict monotonicity)
\item if $C$ is empty or independent, $AC$ and $BC$ are not independent
(qualified), then
$f(AC)+f(AB)\ge f(C)+f(ABC)+1$ (strict submodularity).
\end{enumerate}
If for any such function $f$ we have $f(v)\ge \alpha$ for some vertex
$v$ of $G$, then the complexity of $G$ is at least $\alpha$.

As in the formula above, throughout the paper we drop the $\cup$ sign when
denoting a union of subsets, and make no distinction between a vertex and
the one-element subset containing that vertex.  In particular, $aAB$ denotes
the subset $\{a\}\cup A\cup B$.  Lower case letters $a$, $b$, etc will
denote vertices of $G$, and capital letters $A$, $B$, etc denote subsets of
vertices.

The correctness of the method follows from the observation that any secret
sharing scheme is a collection of random variables, and $f(A)$ can be chosen
to be the total entropy of the shares given to the vertices in $A$ divided
by the entropy of the secret.  Thus the relative size of participant $v$'s
share is just $f(v)$.  The above properties of $f$ are the translations of
basic Shannon inequalities and the fact that the distributed secret is
independent from any independent set, and is determined by any
non-independent subset of $G$.  Consequently every secret sharing scheme
assigns a share of relative size $\alpha$ to some participant.

Following the entropy notation, we will use the following abbreviations:
\begin{align*}
\I_f(A;B) &\eqdef f(A)+f(B)-f(AB); \\
\I_f(A;B\|C) &\eqdef f(AC)+f(BC)-f(C)-f(ABC).\\
\end{align*}
The submodularity property gives that both expressions are non-negative,
moreover strict submodularity is equivalent to $\I_f(A;B\|C)\ge 1$ whenever
$C$ is either unqualified or empty, and $AC$ and $BC$ are qualified subsets.

In the formulas we frequently omit the function $f$, and any vertex, or
subset of vertices stand for its $f$ value as well.  In particular, we use
$\I(A;B)$ and $\I(A;B\|C)$ instead of $\I_f(A;B)$ and $\I_f (A;B\|C)$
whenever $f$ is clear from the context.

\section{Some general lemmas}\label{sec:general_lemmas}

We start by stating and proving some general lemmas which resemble to
information theoretic arguments.  Fix a function $f$ satisfying properties
1--3 defined in Section \ref{sec:method}.

\begin{lemma}\label{lemma:02}
If $C$ is independent, $AC$ and $BC$ are not, then $\I(A;B\|C)\ge 1$.
\end{lemma}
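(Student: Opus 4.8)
The plan is to read this statement as nothing more than the strict submodularity property (property~3 of Section~\ref{sec:method}) written out in the $\I$-notation, so that the ``proof'' is a one-line check that the hypotheses of property~3 are met. Recall that for a graph-based access structure a set of vertices is \emph{qualified} precisely when it is not independent (it contains an edge), so the hypothesis ``$AC$ and $BC$ are not independent'' is literally ``$AC$ and $BC$ are qualified''. Likewise an independent $C$ is in particular ``empty or independent''. Hence property~3 applies verbatim and yields $f(AC)+f(BC)\ge f(C)+f(ABC)+1$, which, by the definition $\I(A;B\|C)=f(AC)+f(BC)-f(C)-f(ABC)$, is exactly the claimed $\I(A;B\|C)\ge 1$.

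The only bookkeeping I would add is the observation that the hypotheses are not vacuous: if $A\subseteq C$ then $AC=C$ would be independent, contradicting that $AC$ is not; thus $A\not\subseteq C$, and symmetrically $B\not\subseteq C$. This is not needed for the bound itself, but it is reassuring and it matches how the lemma will be used later, where $A$ and $B$ are typically single vertices or edges essentially disjoint from $C$. I would also note in passing that property~3 is genuinely needed here: properties~1 and~2 alone give only $\I(A;B\|C)=\bigl[f(AC)-f(C)\bigr]-\bigl[f(ABC)-f(BC)\bigr]\ge 0$, i.e.\ the plain submodularity defect, and nothing in monotonicity or strict monotonicity forces that defect up to~$1$.

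Consequently I do not expect any real obstacle: the whole content of the lemma lies in translating ``not independent'' into ``qualified'', and it is stated on its own only so that the later, genuinely combinatorial arguments can cite a clean graph-flavoured form of strict submodularity without re-deriving it. The harder lemmas of this section will be the ones that chain several copies of this inequality together along paths of $G$; Lemma~\ref{lemma:02} is the atomic building block they rest on.
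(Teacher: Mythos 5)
Your proof is correct and coincides with the paper's: Lemma~\ref{lemma:02} is simply the strict submodularity property (property~3) restated in the $\I$-notation, with ``not independent'' translated to ``qualified''. The extra remarks about non-vacuousness and the necessity of property~3 are fine but not needed.
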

\begin{proof}
This is just the restatement of the strong submodularity requirement.
\end{proof}

Let $A$ and $B$ be disjoint subsets of the vertices. We say that there is a
{\em 1-factor from $B$ to $A$} if $B=\{b_1,\dots, b_t\}$, and there are $t$
different elements $a_1,\dots, a_t$ in $A$ such that $a_ib_j$ is an edge if
and only if $i=j$.

\begin{lemma}\label{lemma:3}
Let $A$ and $B$ be disjoint subsets of the vertices, $B$ is independent, $A$
is not independent such that there is a 1-factor from $B$ to $A$.  Then
$\I(A;B)\ge |B|$.
\end{lemma}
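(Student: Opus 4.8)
The plan is to induct on $|B|$. For the base case $|B|=1$, write $B=\{b_1\}$ with the 1-factor pairing $b_1$ to some $a_1\in A$. Since $b_1a_1$ is an edge, the pair $\{a_1,b_1\}$ is qualified, hence $A\cup\{b_1\}=AB$ is not independent; $B=\{b_1\}$ is independent by hypothesis. Now apply strict monotonicity (property 2) to the inclusion $B\subset AB$: it gives $f(B)+1\le f(AB)$, i.e. $f(A)+f(B)-f(AB)\le f(A)-1$. That is the wrong direction, so instead I should extract the "1" from the independence of $B$ together with non-independence of $AB$ via strict submodularity. The cleaner route: I want $\I(A;B)=f(A)+f(B)-f(AB)\ge 1$; since $f$ is monotone, $f(AB)\le f(A)+f(B)$ only gives $\ge 0$. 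The genuine gain of $1$ comes from Lemma \ref{lemma:02} applied with a carefully chosen conditioning set. With $|B|=1$: take $C=\emptyset$, and note $\I(a_1;b_1\|\emptyset)=f(a_1)+f(b_1)-f(a_1b_1)\ge 1$ by strict submodularity since $\emptyset$ is empty and both $\{a_1\}\cup\emptyset$ — wait, $\{a_1\}$ alone is independent. So that fails too. The right statement: $\{a_1,b_1\}$ is qualified, so I use that $AB\supseteq\{a_1,b_1\}$ is non-independent while $B=\{b_1\}$ is independent, and monotonicity plus strict monotonicity on $B\subset AB$ — but again that bounds $f(AB)$ from below, not $f(A)+f(B)-f(AB)$. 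The correct mechanism is to peel one vertex at a time off $A$.

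Here is the induction I actually intend to run, on $t=|B|$. Relabel so the 1-factor pairs $b_i\leftrightarrow a_i$ for $i=1,\dots,t$, with all $a_i\in A$ distinct, $a_ib_j$ an edge iff $i=j$. Consider $\I(A;B)=f(A)+f(B)-f(AB)$. Split off $b_t$ from $B$: by submodularity applied to the sets $A$ and $B$ over the common part $B\setminus\{b_t\}$, one gets a chain relating $\I(A;B)$ to $\I(A\,(B\setminus b_t)\,;\,b_t)$ plus $\I(A;B\setminus b_t)$. Concretely, the identity
\[
\I(A;B)=\I\big(A;\,B\setminus\{b_t\}\big)+\I\big(A\cup(B\setminus\{b_t\});\,b_t\big)
\]
holds because both sides expand to $f(A)+f(B)-f(AB)$ after cancellation. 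For the first term, $B\setminus\{b_t\}$ is still independent and $A$ is still non-independent with a 1-factor (from $b_1,\dots,b_{t-1}$ to $a_1,\dots,a_{t-1}$), so the induction hypothesis gives $\I(A;B\setminus\{b_t\})\ge t-1$. For the second term, set $C:=A\cup(B\setminus\{b_t\})$ viewed as conditioning — no: the term is $\I(\,\cdot\,;b_t)=f(C)+f(b_t)-f(Cb_t)$, which is just unconditional mutual information with $C=A\cup(B\setminus\{b_t\})$. To get $\ge 1$ here I need: $b_t$ is independent (true, singleton from an independent set), and... I need $Cb_t$ non-independent and $C$ such that the "$+1$" applies. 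Since $a_tb_t$ is an edge and $a_t\in C$, the set $Cb_t$ contains the edge $\{a_t,b_t\}$, so $Cb_t$ is non-independent; and by strict monotonicity applied to $\{b_t\}\subset Cb_t$ — still the wrong direction. So the honest tool is: rewrite $\I(C;b_t)=\I(a_t;b_t\|C\setminus a_t)$ — check: $f((C\setminus a_t)\cup a_t)+f(b_t\cup(C\setminus a_t))-f(C\setminus a_t)-f(a_t b_t (C\setminus a_t))$; hmm this is not equal to $f(C)+f(b_t)-f(Cb_t)$ in general. Let me instead bound $\I(C;b_t)\ge \I(a_t;b_t\| C\setminus\{a_t\})$ using submodularity (conditioning on a larger set and restricting the first argument both only decrease mutual information appropriately — this is the standard "data processing" style inequality $\I(C;b_t)\ge \I(a_t;b_t)$ when... ). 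Actually the clean fact: $\I(C;D)\ge \I(a;D)$ whenever $a\in C$? No. The clean fact is $\I(C;D)\ge \I(C;D\|E)$ is false in general too.

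Let me restate the intended second-term bound properly: I claim $f(C)+f(b_t)-f(Cb_t)\ge 1$. Apply submodularity to $X=C$ and $Y=a_tb_t$ with intersection $\{a_t\}$: $f(C)+f(a_tb_t)\ge f(a_t)+f(Cb_t)$, so $f(C)-f(Cb_t)\ge f(a_t)-f(a_tb_t)$, hence $\I(C;b_t)\ge f(a_t)+f(b_t)-f(a_tb_t)=\I(a_t;b_t)$. Now $\I(a_t;b_t)=f(a_t)+f(b_t)-f(a_tb_t)$: here $\{b_t\}\subset\{a_t,b_t\}$, the smaller is independent, the larger is not (it's the edge), so strict monotonicity gives $f(a_tb_t)\ge f(b_t)+1$, which yields $\I(a_t;b_t)\ge f(a_t)-1$; useless unless I instead use the symmetric inclusion $\{a_t\}\subset\{a_t,b_t\}$ — but $\{a_t\}$ is independent too, and $\{a_t,b_t\}$ is not, so $f(a_tb_t)\ge f(a_t)+1$, hence $\I(a_t;b_t)\ge f(b_t)-1\ge -1$? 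That's wrong-signed. The resolution: $\I(a_t;b_t)\ge 1$ directly from strict submodularity with $C=\emptyset$ — property 3 with $A=\{a_t\}$, $B=\{b_t\}$, $C=\emptyset$: it requires $AC=\{a_t\}$ and $BC=\{b_t\}$ to be \emph{non-independent}, which they are not. So strict submodularity doesn't directly apply to an isolated edge with empty conditioning either.

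Given this, the real proof must introduce a nonempty but independent conditioning set, and the natural candidate is $C=B\setminus\{b_t\}$ throughout, using $\I(a_t;b_t\|B\setminus\{b_t\})\ge 1$ from Lemma \ref{lemma:02}: the conditioning set $B\setminus\{b_t\}$ is independent, $\{a_t\}\cup(B\setminus\{b_t\})$ is non-independent? Not necessarily — $a_t$ need not be adjacent to any $b_i$ with $i<t$. Hmm. This is exactly the subtlety, and it is the step I expect to be the main obstacle: arranging that each "peeled" edge $\{a_i,b_i\}$ becomes qualified \emph{conditioned on an independent set}, which forces the "$+1$". I believe the fix is to peel $B$ in an order and condition on $B$-vertices only (never $A$-vertices), using $\I(A;b_i \| b_{i+1}\cdots b_t)\ge \I(a_i; b_i\| b_{i+1}\cdots b_t)$ by the submodularity step above, then noting $\{a_i\}\cup\{b_{i+1},\dots,b_t\}$ \emph{is} qualified precisely when... it isn't, in general. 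So the genuinely correct approach is probably to condition on $A$ minus the relevant $a_i$: use the telescoping $\I(A;B)=\sum_{i=1}^{t}\I(A\setminus\{a_1,\dots\};\,b_i\,\|\,\cdots)$ arranged so that at stage $i$ the edge $a_ib_i$ lies in the "qualified" side and the conditioning set is independent (it will be a subset of $B$, which is independent). I expect that with the right bookkeeping — conditioning on $\{b_{i+1},\dots,b_t\}$, which is independent as a subset of $B$, while the active pair sees the edge $a_ib_i$ inside $A\cup\{b_i\}$ — Lemma \ref{lemma:02} applies at each of the $t$ stages to contribute $1$, summing to $|B|$. Writing out this telescoping identity cleanly, and verifying the independence/non-independence hypotheses of Lemma \ref{lemma:02} at every stage (this is where the 1-factor condition "$a_ib_j$ edge iff $i=j$" is used, to guarantee the conditioning sets stay independent), is the crux; the algebra is then routine.
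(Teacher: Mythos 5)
You have correctly guessed the overall shape of the argument---peel the matched pairs off one at a time and extract $+1$ per pair from strict submodularity---but the proposal does not close the one step that matters, and the telescoping identity it rests on is wrong. The claimed identity $\I(A;B)=\I(A;B\setminus\{b_t\})+\I\big(A\cup(B\setminus\{b_t\});b_t\big)$ is false: writing $B^*=B\setminus\{b_t\}$, the right-hand side expands to $f(A)+f(B^*)+f(b_t)-f(AB)$, which exceeds $\I(A;B)$ by $\I(B^*;b_t)\ge0$. So even if you bounded both right-hand terms from below, that would not bound $\I(A;B)$ from below. The correct chain rule is $\I(A;B)=\I(A;B^*)+\I(A;b_t\|B^*)$, so the quantity you must show is at least $1$ is the \emph{conditional} term $\I(A;b_t\|B^*)$---and this is exactly the point you leave unresolved, correctly observing that $\{a_t\}\cup B^*$ need not be qualified and that none of the conditioning sets you try makes Lemma \ref{lemma:02} fire. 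Your closing sentence acknowledges that the crux is still open, so as it stands this is a plan, not a proof.

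The missing idea (which is how the paper proceeds) is to split $A=a_tA^*$ and decompose once more: one checks by expansion that $\I(A;b_t\|B^*)=\I(b_t;A^*\|a_tB^*)+\I(a_t;b_t\|B^*)$. The second summand is $\ge0$ by ordinary submodularity. For the first summand the conditioning set is $a_tB^*$, i.e.\ the matched vertex $a_t$ is moved \emph{into the conditioning set}: $a_tB^*$ is independent precisely because the 1-factor condition ``$a_ib_j$ is an edge iff $i=j$'' forbids edges from $a_t$ into $B^*$; the set $b_t\cup a_tB^*$ is qualified via the edge $a_tb_t$; and $A^*\cup a_tB^*\supseteq A$ is qualified because $A$ is not independent. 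Hence Lemma \ref{lemma:02} gives $\I(b_t;A^*\|a_tB^*)\ge1$, so $\I(A;B)\ge\I(A;B^*)+1$, and induction on $|B|$ (with the trivial base case $B=\emptyset$ rather than your $|B|=1$ false starts) finishes the proof. Without this relocation of $a_t$ none of the variants you tried can produce the strict ``$+1$'', which is why each of your attempts came out wrong-signed or inapplicable.
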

\begin{proof}
The proof goes by induction on the number of elements in $B$. When $B$ is
empty, then $|B|=0$ and there is nothing to prove.  Otherwise let $ab$ be an
edge in the 1-factor, and let the two sets be $A=aA^*$ and $B=bB^*$.  Then
$aB^*$ is independent, $abB^*$ and $aA^*B^*$ are not, thus we get
$\I(b;A^*\|aB^*)\ge 1$ by Lemma \ref{lemma:02}.  Then
\begin{align*}
\I(A;bB^*)-\I(A;B^*) &= \big(f(A)+f(bB^*) - f(bAB^*)\big) -
    \big(f(A)+f(B^*)-f(AB^*)\big) \\
    &= f(bB^*) - f(B^*) + f(aA^*B^*) - f(abA^*B^*) \\
    &= \I(b;A^*\|aB^*) + \I(a;b\|B^*) \ge 1 + 0.
\end{align*}
From here the induction hypothesis gives the claim.
\end{proof}

\begin{lemma}\label{lemma:4}
With the same assumptions as in Lemma \ref{lemma:3}, $f(A) \ge |B|+1$.
\end{lemma}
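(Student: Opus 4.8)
The plan is to deduce Lemma~\ref{lemma:4} from Lemma~\ref{lemma:3} by producing one additional unit of information from the non-independence of $A$ itself. Since $A$ is not independent, $A$ contains an edge, so $A$ is qualified; and since $B$ is independent (indeed $\emptyset$ is independent), the strong submodularity in the form of Lemma~\ref{lemma:02} should give an extra $+1$ on top of the bound $\I(A;B)\ge|B|$ already furnished by Lemma~\ref{lemma:3}.

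Concretely, I would start from the definition $\I(A;B)=f(A)+f(B)-f(AB)$, so that $f(A)=\I(A;B)+f(AB)-f(B)$. By Lemma~\ref{lemma:3} the first term is at least $|B|$, so it remains to show $f(AB)-f(B)\ge 1$, i.e. that adjoining the qualified set $A$ to the independent set $B$ strictly increases $f$ by at least one unit. This is exactly strict monotonicity (property~2): $B\subset AB$, $B$ is independent, and $AB$ is not independent (it contains the edge inside $A$), hence $f(B)+1\le f(AB)$. Adding the two inequalities yields $f(A)=\I(A;B)+\big(f(AB)-f(B)\big)\ge |B|+1$.

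Alternatively, and perhaps more in the spirit of the preceding lemmas, one can pick an edge $ab'$ inside $A$, write $A=ab'A'$, take $C=\emptyset$ in Lemma~\ref{lemma:02} to get $\I(a;b'\|\emptyset)=f(a)+f(b')-f(ab')\ge 1$, and combine this with the chain used in Lemma~\ref{lemma:3}; but the route through strict monotonicity is cleaner and avoids re-deriving anything. I do not anticipate a real obstacle here: the only point to be slightly careful about is confirming that $AB$ is genuinely non-independent, which is immediate because $A$ already contains an edge and $A\subseteq AB$. So the whole argument is a two-line combination of Lemma~\ref{lemma:3} with property~2 of $f$.
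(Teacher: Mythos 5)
Your main argument is exactly the paper's proof: apply Lemma~\ref{lemma:3} to get $\I(A;B)\ge|B|$, rewrite $f(A)=\I(A;B)+\bigl(f(AB)-f(B)\bigr)$, and use strict monotonicity on the pair $B\subset AB$ to gain the extra $+1$. This is correct and identical in substance to the paper's two-line proof.
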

\begin{proof}
According to Lemma \ref{lemma:3}, $\I(A;B) = f(A)+f(B)-f(AB)
\ge |B|$. Thus
$$
    f(A) \ge |B| + \big(f(AB)-f(B)\big) \ge |B| + 1
$$
by strict monotonicity as $B$ is independent and $AB$ is not.
\end{proof}

\begin{lemma}\label{lemma:5}
Let $A$ and $B$ be disjoint subsets of the vertices such that neither $A$
nor $B$ is independent.  Suppose $B$ contains an independent subset $B'$ and
a 1-factor from $B'$ to $A$.  Then $\I(A;B)\ge |B'|+1$.
\end{lemma}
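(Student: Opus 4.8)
The plan is to peel off from $B$ the distinguished independent piece $B'$, write $B'' \eqdef B\setminus B'$, and express $\I(A;B)$ as a sum of two information quantities — one that Lemma~\ref{lemma:3} already controls and one that strict submodularity handles. The key identity, a one‑line chain‑rule computation, is
\[
 \I(A;B) \;=\; \I(A;B') \;+\; \I(B'';A\|B').
\]
To verify it, expand the right‑hand side: $\I(A;B') = f(A)+f(B')-f(AB')$ and $\I(B'';A\|B') = f(B'B'')+f(AB')-f(B')-f(AB'B'')$; the terms $f(B')$ and $f(AB')$ cancel, and since $B'B''=B$ and $AB'B''=AB$ what remains is exactly $f(A)+f(B)-f(AB)=\I(A;B)$.

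Next I would bound the first term by Lemma~\ref{lemma:3} applied to the pair $A$, $B'$. These are disjoint (as $B'\subseteq B$ and $A\cap B=\emptyset$), $B'$ is independent, $A$ is not independent by hypothesis, and the 1‑factor from $B'$ to $A$ assumed in the statement is precisely the 1‑factor Lemma~\ref{lemma:3} requires. Hence $\I(A;B')\ge |B'|$.

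Finally I would bound the second term by strict submodularity, i.e.\ Lemma~\ref{lemma:02}, reading $\I(B'';A\|B')$ with $B''$ and $A$ as the two sets and $B'$ in the conditioning position: $B'$ is independent, while $B''B' = B$ is not independent (given) and $AB'\supseteq A$ is not independent (given). Therefore $\I(B'';A\|B')\ge 1$. (Incidentally this also forces $B''\ne\emptyset$, consistent with the assumption that $B$ itself is not independent, since otherwise $B=B'$ would be independent.) Adding the two estimates yields $\I(A;B)\ge |B'|+1$.

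There is no deep obstacle here; the proof is short once the identity above is in hand. The only points requiring care are the bookkeeping in that identity and, more conceptually, noticing that it is the non‑independence of $B$ — not of $A$ — that supplies the extra unit: Lemma~\ref{lemma:3} on its own only gives $|B'|$, and the ``$+1$'' comes from the submodularity step, which must be set up with $B'$ (the one set guaranteed independent) in the conditioning slot rather than $A$.
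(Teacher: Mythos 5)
Your proof is correct and follows essentially the same route as the paper: decompose $\I(A;B)$ by the chain rule into $\I(A;B')+\I(A;B''\|B')$, bound the first term by Lemma~\ref{lemma:3} and the second by strict submodularity (Lemma~\ref{lemma:02}) using the non-independence of $B$ and of $A$ with $B'$ in the conditioning slot. The only cosmetic difference is the order of arguments in the conditional term, which is symmetric.
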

\begin{proof}
Let $B=B'B''$ where $B'$ is the independent set with the 1-factor. Then
$$
\I(A;B'B'') = \I(A;B') + \I(A;B''\|B') \ge |B'|+1
$$
by Lemmas \ref{lemma:3} and \ref{lemma:02}.
\end{proof}

\section{The graph family $\mathcal G_d$}\label{const_sec}

Let $n_2, n_3, \dots,$ be integers, each one is at least 5, and $n_2$ is
even.  We construct a sequence of bipartite graphs $G_2$, $G_3$, \dots as
follows.  $G_2$ is the cycle with $n_2$ nodes; $A_2$, $B_2$ are the two
independent sets consisting of the odd and even vertices, respectively.

Suppose $G_d$ has been constructed; the equal size partition $(A_d,B_d)$ of
its vertices shows that $G_d$ is bipartite; $A_d$ and $B_d$ are independent,
and all edges of $G_d$ go between $A_d$ and $B_d$.  (This property holds for
$G_d$ by induction.) Take $n_{d+1}$ copies of $G_d$ denoted as $G^i_d$ with
$G^{n_{d+1}+1}_d=G^1_d$.  The vertex set of $G^i_d$ is $A^i_d\cup B^i_d$
where $A^i_d$ and $B^i_d$ are the (equal size) independent subsets of the
vertices of $G^i_d$.
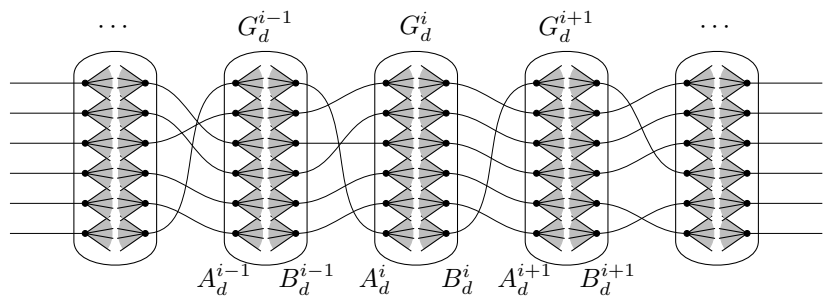
\begin{figure}[!ht]\begin{center}
\begin{tikzpicture}
\foreach \x in {0,...,4}{
\foreach \y in {1,...,6}{
\coordinate (a\x\y) at (\x*2.0-0.1, 0.4*\y);
\coordinate (b\x\y) at (0.7+\x*2.0,0.4*\y);
\filldraw[lightgray] (a\x\y) -- ++(-35:0.357) arc(-35:35:0.357) -- (a\x\y);
\filldraw[lightgray] (b\x\y) -- ++(145:0.357) arc(145:215:0.357) -- (b\x\y);
\draw[fill] (a\x\y) circle (1.1pt) (b\x\y) circle (1.1pt);
\draw (a\x\y) -- +(-35:0.41) (a\x\y) -- +(-10:0.38) (a\x\y)--+(10:0.38)
  (a\x\y) -- +(35:0.41);
\draw (b\x\y) -- +(145:0.41) (b\x\y)--+(170:0.38) (b\x\y)--+(190:0.38)
  (b\x\y) -- +(215:0.41);
}
\draw (\x*2.0-0.25,0.3)--(\x*2.0-0.25,2.5) to[out=90,in=90] (\x*2.0+0.85,2.5) --
(\x*2.0+0.85,0.3) to[out=-90,in=-90] (\x*2.0-0.25,0.3);
}
\foreach \y in{1,...,6}{
\draw[very thin] (a0\y) -- +(-1.0,0) (b4\y)-- +(1.0,0);
}
\foreach \x/\y in {1/6,2/1,3/2,4/5,5/3,6/4}
{ \draw[very thin] (b0\x) to[out=0,in=180] (a1\y); }
\foreach \x/\y in {1/2,2/3,3/5,4/4,5/6,6/1}
{ \draw[very thin] (b1\x) to[out=0,in=180] (a2\y); }
\foreach \x/\y in {1/6,2/1,3/2,4/3,5/4,6/5}
{ \draw[very thin] (b2\x) to[out=0,in=180] (a3\y); }
\foreach \x/\y in {1/2,2/1,3/4,4/5,5/6,6/3}
{ \draw[very thin] (b3\x) to[out=0,in=180] (a4\y); }
\draw (0.3,3.15) node {$\cdots$};
\draw (2.3,3.15) node {$G^{i-1}_d$};
\draw (4.3,3.15) node {$G^i_d$};
\draw (6.3,3.15) node {$G^{i+1}_d$};
\draw (8.3,3.15) node {$\cdots$};

\draw (1.75,-0.215) node {$A^{i-1}_d$};
\draw (2.85,-0.215) node {$B^{i-1}_d$};
\draw (3.75,-0.215) node {$A^i_d$};
\draw (4.85,-0.215) node {$B^i_d$};
\draw (5.75,-0.215) node {$A^{i+1}_d$};
\draw (6.85,-0.215) node {$B^{i+1}_d$};
\end{tikzpicture}

\kern -15pt\end{center}
\caption{Structure of the graph $G_{d+1}$}\label{fig:structure}
\end{figure}
To get $G_{d+1}$ add an (arbitrary) 1-factor between $B^i_d$ and $A^{i+1}_d$
for all $i=1,2,\dots, n_{d+1}$, see Figure \ref{fig:structure}.  The equal
size partition of $G_{d+1}$ showing that $G_{d+1}$ is bipartite is the union
of vertices in $A^i_d$ and the union of vertices in $B^i_d$, respectively. 
The graph family $\mathcal G_d$ consists of all graphs $G_d$ constructed
this way.

\begin{claim}
$G_d$ is a $d$-regular bipartite graph on $n_2n_3\cdots n_d$
vertices.
\end{claim}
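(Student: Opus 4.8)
The claim follows directly from the recursive construction by induction on $d$, so the plan is to verify the three assertions (bipartiteness, $d$-regularity, vertex count) together in one induction.

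For the base case $d=2$: the graph $G_2$ is the cycle $C_{n_2}$, which is $2$-regular, bipartite (since $n_2$ is even, the odd/even split $(A_2,B_2)$ works), and has $n_2$ vertices, matching the empty product convention $n_2n_3\cdots n_2 = n_2$.

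For the inductive step, assume $G_d$ is $d$-regular bipartite with partition $(A_d,B_d)$ into equal-size independent sets and $|V(G_d)| = n_2n_3\cdots n_d$. The construction of $G_{d+1}$ takes $n_{d+1}$ disjoint copies $G_d^1,\dots,G_d^{n_{d+1}}$ and adds, for each $i$, a $1$-factor (perfect matching) between $B_d^i$ and $A_d^{i+1}$, indices cyclic. I would check: (a) \emph{Vertex count:} clearly $n_{d+1}$ copies give $n_{d+1}\cdot(n_2\cdots n_d) = n_2n_3\cdots n_{d+1}$ vertices. (b) \emph{Regularity:} a vertex in $A_d^i$ keeps its $d$ neighbors inside $G_d^i$ and gains exactly one new neighbor in $B_d^{i-1}$ (from the $1$-factor between $B_d^{i-1}$ and $A_d^i$); symmetrically a vertex in $B_d^i$ keeps its $d$ neighbors inside $G_d^i$ and gains exactly one new neighbor in $A_d^{i+1}$; here one must note that a $1$-factor between two equal-size sets is a perfect matching, so every vertex on each side is matched exactly once, hence every vertex has degree $d+1$. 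This uses that $|A_d^i| = |B_d^i|$, which is part of the inductive hypothesis. (c) \emph{Bipartiteness:} set $A_{d+1} = \bigcup_i A_d^i$ and $B_{d+1} = \bigcup_i B_d^i$; these have equal size since each $|A_d^i| = |B_d^i|$. Edges inside each copy go between $A_d^i$ and $B_d^i$ by the inductive hypothesis, hence between $A_{d+1}$ and $B_{d+1}$; the added $1$-factor edges go between $B_d^i \subseteq B_{d+1}$ and $A_d^{i+1} \subseteq A_{d+1}$, again crossing the partition. So $G_{d+1}$ is bipartite with the stated equal-size partition into independent sets, closing the induction.

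There is no real obstacle here; the only points requiring a moment's care are that the cyclic indexing ($G_d^{n_{d+1}+1} = G_d^1$) makes the $1$-factor attachment uniform so that \emph{every} copy contributes exactly one extra edge to each of its vertices (no boundary copies are left with degree $d$), and that the "$1$-factor from an equal-size set to an equal-size set" is genuinely a perfect matching — both of which are immediate from the definitions. The bookkeeping that $A_{d+1}$ and $B_{d+1}$ remain independent is automatic because all new edges join the two parts.
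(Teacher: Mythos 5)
Your proof is correct and follows the same route the paper intends: the paper simply declares the claim ``immediate from the definition,'' and your induction on $d$ (vertex count, regularity via the two added $1$-factor edges per copy being one per vertex, bipartiteness via the union of the parts) is exactly the verification being elided. Nothing is missing.
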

\begin{proof}
Immediate from the definition.
\end{proof}

According to Stinson's theorem \cite{stinson}, the information ratio of a
$d$-regular graph is at most $(d+1)/2$.  The next theorem claims that this
bound is tight for graphs in $\mathcal G_d$.  As van Dijk's graph family is
contained properly in $\mathcal G_d$, this theorem extends the main result
of \cite{bssv}.

\begin{theorem}\label{thm:main}
For any normalized entropy function $f$ on $G_d\in \mathcal G_d$ the
following inequality holds:
\begin{equation}\label{eq:thm-main}
   \sum_{v\in G_d} f(v) \ge \frac{d+1}2\,|G_d| .
\end{equation}
Consequently the information ratio of $G_d$ is exactly $(d+1)/2$.
\end{theorem}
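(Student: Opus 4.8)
The plan is to prove \eqref{eq:thm-main} by induction on $d$, following the recursive construction of $\mathcal G_d$, and then to read off $\cx(G_d)$. For the base case $d=2$ the graph $G_2$ is an even cycle of length $n_2\ge 6$, and the required inequality $\sum_{v} f(v)\ge\tfrac32 n_2$ is (the averaged form of) the classical lower bound on the complexity of a long cycle; I would re-derive it by applying strict submodularity (Lemma~\ref{lemma:02}) to triples of consecutive vertices together with strict monotonicity, and summing these local inequalities around the cycle.

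For the inductive step one uses that $G_{d+1}$ is a cyclic chain of $n_{d+1}$ copies $G^1_d,\dots,G^{n_{d+1}}_d$ of $G_d$, consecutive copies being joined by a single $1$-factor. Since each $G^i_d$ is an \emph{induced} subgraph of $G_{d+1}$, the restriction of $f$ to $G^i_d$ is again a normalized entropy function, so the induction hypothesis gives $\sum_{v\in G^i_d}f(v)\ge\tfrac{d+1}{2}|G_d|$ and, summing over $i$, $\sum_v f(v)\ge\tfrac{d+1}{2}|G_{d+1}|$. The whole point is to recover the missing $\tfrac12|G_{d+1}|$, i.e.\ the extra $\tfrac12$ per vertex. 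This cannot be done with bare copies: the inductive statement has to be strengthened so that it applies to a copy of $G_d$ \emph{together with the attaching $1$-factor edges that link it to its neighbours in the chain}, and then delivers the stronger value $\tfrac{d+2}{2}|G_d|$ on that copy's vertices. The mechanism producing the gain is Lemmas~\ref{lemma:3}--\ref{lemma:5}: a $1$-factor running from an independent set $B$ into a non-independent set $A$ forces $f(A)\ge|B|+1$ (and $\I(A;B)\ge|B'|+1$ in the two-sided version), and the gluing $1$-factors of $G_{d+1}$ supply exactly such configurations — for instance there is a $1$-factor from $A^{i+1}_d$, which is independent, into $G^i_d$, which is not. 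The bookkeeping is coherent: descending one more level, from $G_d$ to its chain of $n_d$ copies of $G_{d-1}$, each sub-copy picks up precisely one further attaching $1$-factor on each side and the target rises by exactly $\tfrac12$ per vertex, so a statement ``copy of $G_d$ with $\ell$ attaching $1$-factors per side $\Rightarrow$ vertex-sum $\ge\tfrac{d+1+\ell}{2}|G_d|$'' reduces to the same statement for $G_{d-1}$ with $\ell+1$, and \eqref{eq:thm-main} is the case $\ell=0$.

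The step I expect to be the genuine obstacle is pinning down the exact form of this strengthened hypothesis. One has to specify precisely which ``context'' of a $G_d$-copy is admissible — the attached neighbours are themselves sides of other copies rather than pendant vertices, and this is essential, since a pendant $1$-factor would be cost-free and the strengthened statement would become false — and one has to verify that restricting a valid normalized entropy function to such a context is again a valid normalized entropy function; the subtle point is that the attaching edges must not turn a set that is not independent in the ambient graph into one that is independent in the smaller graph, which imposes conditions on how the surrounding copies sit relative to the copy under consideration. Once \eqref{eq:thm-main} is in hand the conclusion is immediate: for every scheme $\mathcal S$ the associated $f$ satisfies $\max_v f(v)\ge\frac{1}{|G_d|}\sum_{v\in G_d} f(v)\ge\tfrac{d+1}{2}$, so $\cx(G_d)\ge\tfrac{d+1}{2}$, while covering the $d$-regular graph $G_d$ by its spanned stars with weight $0.5$ and applying Stinson's theorem gives $\cx(G_d)\le\tfrac{d+1}{2}$, hence equality.
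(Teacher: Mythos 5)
Your plan correctly identifies the shape of the argument --- induction along the construction, decomposition of $G_{d+1}$ into its $n_{d+1}$ copies of $G_d$, the observation that naive induction only yields $\frac{d+1}{2}|G_{d+1}|$, and the need for a strengthened inductive statement worth an extra $\frac12$ per vertex, sourced in Lemmas~\ref{lemma:3}--\ref{lemma:5} applied to the gluing 1-factors. But the step you yourself flag as ``the genuine obstacle'' is precisely where the proof lives, and your proposed resolution (an inductive statement parametrized by the number $\ell$ of attaching 1-factors per side, with target $\frac{d+1+\ell}{2}|G_d|$) is left unformulated and runs into exactly the difficulty you name: the attached neighbours belong to other copies, carry edges among themselves, and there is no clean way to restrict $f$ to ``a copy plus its context'' and certify the normalized-entropy axioms on that auxiliary graph. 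As written, the induction does not close; this is a genuine gap, not a routine verification.

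The paper's strengthening is different and sidesteps the context problem entirely. Lemma~\ref{lemma:main} asserts $\sum_{v\in G_d} f(v) - f(G_d) \ge \frac{d}{2}\,|G_d|-1$, a statement about the induced copy alone: the correction term is the joint entropy $f(G_d)$ of the whole copy, not a count of external 1-factors, so restricting $f$ to an induced subgraph is all that is ever needed. That lemma is proved by induction via the telescoping decomposition of Fact~\ref{fact:decomp} into mutual-information terms, each bounded below by Lemmas~\ref{lemma:3} and~\ref{lemma:5}. The attaching 1-factors are then used exactly once per copy, at the top level: the other endpoints of the 1-factor leaving $G^i_d$ form an independent set of size $|G_d|$ (half in the previous copy, half in the next), so Lemma~\ref{lemma:4} gives $f(G^i_d)\ge |G_d|+1$, and adding this to the lemma yields $\sum_{v\in G^i_d} f(v)\ge \frac{d+2}{2}|G_d|$ per copy, hence the theorem for $G_{d+1}$. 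The remaining parts of your outline --- the base case for the even cycle, the legitimacy of restricting $f$ to induced subgraphs, and the derivation of $\cx(G_d)=(d+1)/2$ from \eqref{eq:thm-main} together with Stinson's bound --- are correct and match the paper.
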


The crux of the proof is the following inequality which will be proved by
induction along the construction of the graph $G_d$.

\begin{lemma}\label{lemma:main}
For any normalized entropy function $f$ on $G_d$ the following inequality
holds:
\begin{equation}\label{eq:lemma-main}
   \sum_{v\in G_d} f(v) - f(G_d) \ge
   \frac{d}2\,|G_d| - 1.
\end{equation}
\end{lemma}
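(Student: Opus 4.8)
The plan is to prove (\ref{eq:lemma-main}) by induction on $d$ along the construction of $G_d$, the point being that the $1$-factors gluing $G_{d+1}$ out of $n\eqdef n_{d+1}$ copies of $G_d$ can each be traded for a bonus of about $|G_d|/2$ through Lemma \ref{lemma:5}. Set $m\eqdef|G_d|$, so $|A^i_d|=|B^i_d|=m/2$ and $|G_{d+1}|=nm$. Since the added $1$-factors only join distinct copies, each $G^i_d$ is an induced subgraph of $G_{d+1}$; hence a subset of $G^i_d$ is independent (resp.\ qualified) in $G^i_d$ exactly when it is so in $G_{d+1}$, so the restriction of $f$ to $G^i_d$ is a normalized entropy function there and the induction hypothesis applies to it. Summing it over the $n$ copies gives
\[
   \sum_{v\in G_{d+1}}f(v)\ \ge\ \sum_{i=1}^{n}f(G^i_d)+n\Bigl(\tfrac d2 m-1\Bigr),
\]
so it will remain to show $\sum_{i=1}^{n}f(G^i_d)-f(G_{d+1})\ge \tfrac{nm}2+(n-1)$.

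For that I would telescope over the copies: with $W_k\eqdef G^1_d\cup\cdots\cup G^k_d$ one has $\sum_i f(G^i_d)-f(G_{d+1})=\sum_{k=2}^{n}\I(G^k_d;W_{k-1})$. For $2\le k\le n-1$ the copy $G^k_d$ is attached to $W_{k-1}$ only through the $1$-factor between $B^{k-1}_d\subseteq W_{k-1}$ and $A^k_d$; as $B^{k-1}_d$ is independent and a vertex of $B^{k-1}_d$ has, among the vertices of $G^k_d$, exactly its matching partner as neighbour, this matching is a $1$-factor from $B^{k-1}_d$ to $G^k_d$, and Lemma \ref{lemma:5} with $A=G^k_d$, $B=W_{k-1}$, $B'=B^{k-1}_d$ gives $\I(G^k_d;W_{k-1})\ge m/2+1$. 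The closing copy $G^n_d$ is attached to $W_{n-1}$ through the two $1$-factors $B^{n-1}_d\leftrightarrow A^n_d$ and $A^1_d\leftrightarrow B^n_d$; now $B^{n-1}_d\cup A^1_d$ is independent (the neighbours of $B^{n-1}_d$ outside $G^{n-1}_d$ lie in $A^n_d$, those of $A^1_d$ outside $G^1_d$ in $B^n_d$, and for $n\ge3$ these are disjoint from $A^1_d$, $B^{n-1}_d$), and the two matchings together form a $1$-factor from $B^{n-1}_d\cup A^1_d$ to $G^n_d$, so Lemma \ref{lemma:5} gives $\I(G^n_d;W_{n-1})\ge m+1$. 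Adding up, $\sum_{k=2}^n\I\ge(n-2)(m/2+1)+(m+1)=nm/2+(n-1)$, which finishes the step and yields $\sum_{v\in G_{d+1}}f(v)-f(G_{d+1})\ge\frac{d+1}2|G_{d+1}|-1$.

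For the base case $d=2$, where $G_2$ is the even cycle $v_1v_2\cdots v_{n_2}$, I would run the same argument after merging its last two vertices. With $S=\{v_1,\dots,v_{n_2-2}\}$ and $T=\{v_{n_2-1},v_{n_2}\}$, the bound $f(v_{n_2-1})+f(v_{n_2})\ge f(T)$ and the identity $\I(S;T)=f(S)+f(T)-f(G_2)$ give $\sum_v f(v)-f(G_2)\ge\bigl(\sum_{i=1}^{n_2-2}f(v_i)-f(S)\bigr)+\I(S;T)$. Telescoping $v_1,\dots,v_{n_2-2}$ along the path, each step $k\ge3$ contributes $\I(v_k;v_1\cdots v_{k-1})\ge\I(v_k;v_1\cdots v_{k-2}\|v_{k-1})\ge1$ (by the chain rule, $\I\ge0$, and Lemma \ref{lemma:02} with $C=\{v_{k-1}\}$: both $\{v_k,v_{k-1}\}$ and $\{v_1,\dots,v_{k-1}\}$ are qualified), so the first bracket is at least $n_2-4$; and $\I(S;T)\ge3$ by Lemma \ref{lemma:5} with $B'=\{v_1,v_{n_2-2}\}$ and the $1$-factor pairing $v_1$ with $v_{n_2}$ and $v_{n_2-2}$ with $v_{n_2-1}$ (here $n_2\ge6$ guarantees $v_1\not\sim v_{n_2-2}$). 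These add to $n_2-1=\frac22|G_2|-1$. (One could instead view $C_{n_2}$ as built from the single edge by the same cyclic gluing with $n_2/2$ blocks, for which (\ref{eq:lemma-main}) is plain submodularity, and invoke the step above.)

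The step I expect to be the main obstacle is finding where bonuses of size $\Theta(|G_d|)$ can come from at all: $\sum_v f(v)\ge f(G_d)$ is worthless, and a plain vertex-by-vertex telescoping of a cycle-like graph only harvests $O(1)$ per vertex, short of the target by a constant factor. The resolution is to treat $G_{d+1}$ as a cycle of $n$ ``super-vertices'' and feed each gluing $1$-factor into Lemma \ref{lemma:5} for a bonus of $|B'|+1$, observing that the closing super-vertex swallows two such $1$-factors --- this extra $|G_d|/2$ is exactly what makes the bound tight; the corresponding subtlety in the base case is that fusing $v_{n_2-1}$ with $v_{n_2}$ into one non-independent block is precisely what makes Lemma \ref{lemma:5} applicable there and restores the unit by which a cycle beats a tree.
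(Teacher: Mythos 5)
Your proof is correct, and the overall strategy coincides with the paper's: induct on $d$, use the induction hypothesis inside each copy, and reduce the step to $\sum_i f(V_i)-f(G_{d+1})\ge \frac n2|G_d|+n-1$, which is then harvested term by term from Lemma \ref{lemma:5}. Where you genuinely differ is in the decomposition. The paper introduces a special-purpose identity (Fact \ref{fact:decomp}) that splits $\sum_i f(E_i)-f(E_1\cdots E_n)$ into two chains joined by the middle term $\I(E_1E_2E_3;E_4\cdots E_n)$; that middle term is the one that collects the double bonus $|G_d|+1$, witnessed by the independent set $A_1B_3$, while every other term contributes $|G_d|/2+1$. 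You instead use the plain telescoping $\sum_{k=2}^n\I(G^k_d;W_{k-1})$ and let the \emph{closing} copy $G^n_d$ collect the double bonus via the independent set $B^{n-1}_dA^1_d$ (which is exactly the same ``two 1-factors meet one copy'' phenomenon, relocated). The totals agree: $(n-2)(|G_d|/2+1)+(|G_d|+1)=n|G_d|/2+n-1$. Your base case also diverges from the paper's (which reuses Fact \ref{fact:decomp} on singletons): you fuse $v_{n_2-1}v_{n_2}$ into one qualified block so that Lemma \ref{lemma:5} applies with $B'=\{v_1,v_{n_2-2}\}$, and collect $1$ per remaining vertex from strict submodularity; the count $(n_2-4)+3=n_2-1$ is right, using $n_2\ge 6$ exactly where you say. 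Your route is arguably slightly leaner since it needs no analogue of Fact \ref{fact:decomp}; the paper's version buys nothing extra beyond matching the presentation of van Dijk and Blundo et al. One small virtue of your write-up worth keeping: you justify explicitly that each $G^i_d$ is an induced subgraph of $G_{d+1}$, so the restriction of $f$ is a normalized entropy function on the copy --- a point the paper leaves implicit.
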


\noindent
Let us first see how to derive Theorem \ref{thm:main} from this lemma.

\begin{proof}[Proof of Theorem \ref{thm:main} assuming Lemma
\ref{lemma:main}]
The graph $G_{d+1}$ consists of $n_{d+1}$ copies of $G_d \in \mathcal G_d$;
each copy is connected by a 1-factor to the rest of $G_{d+1}$: half of the
edges go to the previous copy, the other half of the edges go to the next
copy; and the other endpoints of this 1-factor form an independent set of
size $|G_d|$, see Figure \ref{fig:structure}.  Thus, by Lemma \ref{lemma:4},
$$
    f(G^i_d) \ge |G_d| + 1.
$$
Applying Lemma \ref{lemma:main} to each copy $G^i_d$ separately, we get
\begin{align*}
  \sum_{v\in G_{d+1}} f(v)
       &= \sum_{i=1}^{n_{d+1}} \sum_{v\in G^i_d} f(v)
          \ge \sum_{i=1}^{n_{d+1}} \left(f(G^i_d)+ \frac{d}2\,|G_d| -1\right) \\
       &\ge \sum_{i=1}^{n_{d+1}} \left( |G_d| + |G_d|\frac{d}2\right)
        = \frac{d+2}2\,|G_{d+1}|,
\end{align*}
which proves (\ref{eq:thm-main}) when $d$ is at least 3. For the case $d=2$
let $a$, $b$, $c$, $d$ be four neighbor vertices on the cycle $G_2$. Now
$f(bc)\ge 3$ by Lemma \ref{lemma:4}, as $ad$ is independent, $bc$ is not,
and there is a 1-factor from $ad$ to $bc$. Consequently
$$
  \sum_{v\in G_2} f(v)
       \ge f(v_1v_2)+f(v_3v_4)+\cdots
       \ge \frac{|G_2|}2 \cdot 3 .
$$
Here we used $f(b)+f(c)\ge f(bc)$, which is a special case of submodularity.
\end{proof}

Now we turn to the proof of Lemma \ref{lemma:main}. The validity of the
following decomposition of an expression similar to the left hand side of
(\ref{eq:lemma-main}) follows immediately from the definitions.
\begin{fact}\label{fact:decomp}
Suppose $n\ge 5$ and $E_i$ are subsets of the vertices for $i=1,\dots,n$.
Then $\sum_{i} f(E_i) - f(E_1\dots E_n)$ can be written as the following 
sum of $3+(n-4)$ entropy terms:
\begin{align*}
  &\I(E_1;E_2) + \I(E_1E_2;E_3) + \I(E_1E_2E_3; E_4E_5\dots E_n) +{}\\
 {}+{}&\I(E_4;E_5) + \I(E_4E_5;E_6) + \cdots+\I(E_4E_5\dots E_{n-1};E_n).
\end{align*}
\end{fact}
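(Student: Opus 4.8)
The plan is to prove the identity purely algebraically, by unfolding every summand on the right through the definition $\I(A;B)=f(A)+f(B)-f(AB)$ and observing that the intermediate union terms cancel in a telescoping pattern. No property of $f$ beyond this linear definition is used, so the statement is in fact an identity among the values of $f$ on the various unions of the $E_i$, independent of monotonicity or submodularity.

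First I would collect the three summands on the first line. Abbreviating $U_j=f(E_1\cdots E_j)$, their expansions read $f(E_1)+f(E_2)-U_2$, then $U_2+f(E_3)-U_3$, then $U_3+f(E_4\cdots E_n)-f(E_1\cdots E_n)$. Adding them, the two copies of $U_2$ and the two copies of $U_3$ cancel in consecutive pairs, leaving the partial sum $f(E_1)+f(E_2)+f(E_3)+f(E_4\cdots E_n)-f(E_1\cdots E_n)$.

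Next I would treat the $n-4$ summands on the second line, which form a single telescoping chain along the growing union $E_4E_5\cdots$. Setting $V_k=f(E_4\cdots E_k)$, so that $V_4=f(E_4)$ and $V_n=f(E_4\cdots E_n)$, the generic term $\I(E_4\cdots E_j;E_{j+1})$ equals $V_j+f(E_{j+1})-V_{j+1}$ for $j=4,\dots,n-1$. Summing these, every interior $V_j$ cancels and I am left with $f(E_4)+f(E_5)+\cdots+f(E_n)-f(E_4\cdots E_n)$.

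Finally I would add the two partial sums. The single term $f(E_4\cdots E_n)$ appears with a plus sign from the first line and a minus sign from the second, so it cancels, and what survives is exactly $\sum_{i=1}^n f(E_i)-f(E_1\cdots E_n)$. The hypothesis $n\ge 5$ only guarantees that the second line is nonempty and that the total number of summands is $3+(n-4)$; there is no genuine obstacle, the entire argument being a careful bookkeeping of which union terms cancel.
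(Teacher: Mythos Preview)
Your argument is correct and is exactly the approach the paper intends: the Fact is stated without proof because, as the paper notes, it ``follows immediately from the definitions,'' and you have simply written out the telescoping verification in full. Nothing is missing or different.
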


\begin{proof}[Proof of Lemma \ref{lemma:main}]
First we prove (\ref{eq:lemma-main}) for the base case $d=2$. The graph
$G_2$ is the cycle on $n=|G_2|\ge 6$ vertices.  Let us denote the vertices
by $v_1,\dots, v_n$.  The edges of $G_2$ are $v_iv_{i+1}$ and $v_nv_1$.  For
this graph inequality (\ref{eq:lemma-main}) rewrites to
$$
    \sum_{i=1}^n f(v_i) - f(G_2) \ge n-1.
$$
To prove it we use the decomposition in Fact \ref{fact:decomp} with
$E_i=\{v_i\}$ to rewrite the left hand side as the $3+(n-4)$-term sum
\begin{align*}
  &\I(v_1;v_2) + \I(v_1v_2;v_3) + \I(v_1v_2v_3; v_4v_5\dots v_n) +{}\\
 {}+{}&\I(v_4;v_5) + \I(v_4v_5;v_6) + \cdots+\I(v_4v_5\dots v_{n-1};v_n).
\end{align*}
All the terms here are non-negative. Furthermore, by Lemma
\ref{lemma:3} we have
\begin{align*}
   \I(v_1v_2; v_3) &\ge 1, \\
   \I(v_4v_5; v_6) &\ge 1, \\
   \hdots & \\
   \I(v_4v_5\dots v_{n-1};v_n) &\ge 1;
\end{align*}
and by Lemma \ref{lemma:5}, $\I(v_1v_2v_3; v_4\dots v_n)\ge 3$ witnessed by
the independent set $v_1v_3$. These numbers add up to $n-1$ proving the
base case.

Next suppose we know the inequality (\ref{eq:lemma-main}) for the graph
$G_d$, and want to prove it for $G_{d+1}$.  The graph $G_{d+1}$ consists of
$n_{d+1}=n\ge 5$ copies of $G_d$.  The vertex set of the $i$-th copy is
$V_i=A_i\cup B_i$, where $A_i$ and $B_i$ are disjoint independent sets of
equal size.  The induction hypothesis tells us that
$$
    \sum_{v\in V_i} f(v) - f(V_i) \ge \frac{d}2\,|G_d|-1
$$
for every $i$. Since $|G_{d+1}| = n\cdot|G_d|$, we are done if we prove that
\begin{align}\label{eq:sufficient}
   \sum_{i=1}^n f(V_i) - f(G_{d+1})&\ge
    \left( \frac{d+1}2\,|G_{d+1}|-1 \right)
     - n\,\left(\frac{d}2\,|G_d|-1 \right) \nonumber\\
       &= \frac n2\,|G_d| +n-1 .
\end{align}
Apply the decomposition using $E_i=V_i$ to get
\begin{align*}
  &\I(V_1;V_2) + \I(V_1V_2;V_3) + \I(V_1V_2V_3; V_4\dots V_n) +{}\\
 {}+{}&\I(V_4;V_5) + \I(V_4V_5;V_6) + \cdots+\I(V_4\dots V_{n-1};V_n)
\end{align*}
as an equivalent form of the left hand side here.  By Lemma \ref{lemma:5}
each of these quantities, except for the third one, have value at least
$1+|G_d|/2$, as $A_i\subset V_i$ is an independent set of size $|G_d|/2$
connected to the other part by a 1-factor.  As $V_1V_2V_3$ has an
independent set of size $|G_d|$ (the vertex set $A_1B_3$) connected by a
1-factor to $V_4\dots V_n$, we have
$$
    \I(V_1V_2V_3; V_4\dots V_n) \ge |G_d| + 1.
$$
The sum of these values is $n|G_d|/2 + n-1$, as was required.
\end{proof}

\section{Graphs in $\mathcal G_d$ with large girth}\label{graph_sec}

Graphs in $\mathcal G_2$ are even length cycles, thus the girth is equal to
the number of vertices -- which can be arbitrary large.  The first challenge
is to find large girth graphs in $\mathcal G_3$ where one can choose the
1-factors between the neighboring independent sets arbitrarily, see Figure
\ref{fig:structure}.  However, it is not clear how to control the
interaction of those choices in order to avoid introducing short cycles.  A
natural approach would be choosing these 1-factors randomly.  With too many
1-factors the graph will have constant girth with overwhelming probability,
but maybe Lovasz Local Lemma \cite{lll} can be used to show that the girth
is $>g$ with non-zero (while exponentially small) probability.
Unfortunately this approach failed as well the attempts to use algebraic
construction.

Our method which finds large girth graphs in $\mathcal G_{d+1}$ is based on
the following idea.  We choose all 1-factors between the different copies of
the $G_d$ graph identically.  Then map $G_{d+1}$ to $G_d$ so that the image
of the vertex $v^i$ in the $i$-th copy of $G_d$ is just $v$.  Then the image
of $G_{d+1}$ is $G^*_d$ which has the same edges as $G_d$ plus a (maximal)
1-factor between the two independent sets $A_d$ and $B_d$ of $G_d$.  {\em
If} we know that $G^*_d$ has no short cycles, {\em then} we can conclude
that neither does $G_{d+1}$.  Lemma \ref{lemma:c1} shows the existence of
such a $G^*_2$ graph on every large enough vertex number.  To ease the
description first we introduce the notion of $\pi$-graphs.

Let $\pi$ be a permutation on $1,\dots,n$.  A {\em $\pi$-graph} is a
3-regular bipartite graph on vertices $a_i$, $b_i$ for $1\le i\le n$ such
that $a_i$ is connected to $b_i$, to $b_{i+1}$ and to $b_{\pi(i)}$, where
$b_{n+1}=b_1$.

A $\pi$-graph is bipartite, the two independent vertex classes are $\{a_i\}$
and $\{b_i\}$.  The edges $a_i$\cto$b_{\pi(i)}$ form a (maximal) 1-factor.

\begin{lemma}\label{lemma:c1}
There is a function $N(g)$ such that for $g>3$ and $n>N(g)$ there is a
permutation $\pi$ and a $\pi$-graph on $2n$ vertices which has girth $> g$.
\end{lemma}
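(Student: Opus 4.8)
The plan is to build the permutation $\pi$ in two stages: first choose a ``skeleton'' 1-factor that already has large girth on its own, then verify that adjoining the standard edges $a_i\cto b_i$ and $a_i\cto b_{i+1}$ does not create short cycles. Concretely, I would think of the $\pi$-graph as the union of two structures on the vertex set $\{a_i,b_i\}$: the \emph{frame} $F$, consisting of the edges $a_i\cto b_i$ and $a_i\cto b_{i+1}$, and the \emph{matching} $M$, consisting of the edges $a_i\cto b_{\pi(i)}$. The frame $F$ is itself a single cycle of length $2n$ (it is the subdivided $n$-cycle $a_1 b_2 a_2 b_3 \cdots$), so $F$ alone has girth $2n > g$. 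The point is to pick $\pi$ so that $M$ only connects vertices that are far apart in $F$.

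The key step is a counting/greedy argument for choosing $\pi$. A cycle in the $\pi$-graph that is short must alternate between frame-paths and matching-edges; since the frame is just one long cycle, a short cycle uses at most $g$ matching edges, and between consecutive matching edges it travels a bounded-length arc along the frame. So I would impose the requirement that for every $i$, the ``chord length'' of $a_i\cto b_{\pi(i)}$ — i.e.\ the distance in $F$ between $a_i$ and $b_{\pi(i)}$ — is large, and more robustly, that no short alternating closed walk exists. This can be arranged by building $\pi$ one value at a time (a greedy construction, or equivalently a deletion/alteration argument on a random $\pi$): when we decide $\pi(i)$, only a bounded number (polynomial in $g$, hence $o(n)$) of choices for $\pi(i)$ would close up a cycle of length $\le g$ with the already-committed edges, so for $n > N(g)$ a valid choice always remains. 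This is essentially the standard argument that random (or greedily chosen) regular graphs have few short cycles, adapted to the constraint that two of the three neighbors of each $a_i$ are forced.

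I expect the main obstacle to be \textbf{bookkeeping the alternating-cycle structure carefully enough to get a clean bad-event count}. One has to argue that a cycle of length $\le g$ in the $\pi$-graph decomposes into at most $g/2$ maximal frame-arcs and at most $g/2$ matching edges, bound the number of such combinatorial ``shapes,'' and check that once the shape and the matching edges other than one are fixed, the last matching edge is determined (or lies in a bounded set) — so that the number of forbidden values of $\pi(i)$ at each step is $O(g^{O(1)})$ independent of $n$. A secondary subtlety is handling degenerate cases: cycles that use a matching edge and its ``reverse,'' short cycles entirely inside the frame (ruled out since the frame has girth $2n$), parity issues from bipartiteness, and the wrap-around at $b_{n+1}=b_1$; none of these is serious but each needs a line. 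Setting $N(g)$ to be, say, any function dominating this $O(g^{O(1)})$ bound (times a constant) then finishes the argument, giving the required $\pi$ and completing the proof of the lemma.
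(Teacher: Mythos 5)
There is a genuine gap, and it sits exactly where the paper's proof spends most of its effort. Your greedy argument --- ``when we decide $\pi(i)$, only a bounded number of choices would close a short cycle, so for $n>N(g)$ a valid choice always remains'' --- is sound only while the number of still-unmatched vertices $b_j$ exceeds the size of the forbidden set. The forbidden set has size roughly $2^g$ (the ball of radius $g$ around $a_i$ in a graph of maximum degree $3$; note this is exponential in $g$, not polynomial as you wrote, though that is immaterial since it is independent of $n$). But $\pi$ must be a \emph{complete} permutation: at the last $\approx 2^g$ steps of the greedy process the pool of available $b_j$'s is smaller than the forbidden set, and nothing prevents every remaining choice from creating a short cycle. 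The same objection applies to your ``deletion/alteration'' variant: deleting the offending matching edges leaves some $a_i$ unmatched, and you are back to the same completion problem. So the obstacle you flag as the main one (bookkeeping the alternating-cycle shapes) is routine; the real obstacle, which your proposal does not address, is finishing the 1-factor.

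The paper's proof follows your greedy outline up to precisely this point (``one can add the next edge in a greedy way until the number of unmatched $2$-degree vertices goes below $2^g$'') and then resolves the endgame by surgery: it locates $2^g-1$ pairwise far-apart intervals of indices untouched by the remaining low-degree vertices, introduces auxiliary vertices $a^*_s$, $b^*_s$ to absorb the leftover unmatched vertices, and then performs local edge swaps inside the intervals to restore $3$-regularity and the $\pi$-graph structure --- at the cost of the girth dropping from $>g$ to $>g/3$, which is why the construction is run with parameter $3g$ and yields $N(g)=2^{12g+4}$. To repair your argument you would either need to reproduce some such completion step, or switch to a genuinely different mechanism (e.g.\ a random $\pi$ plus a careful switching argument that destroys each short cycle by transposing two values of $\pi$ without creating new ones); as written, the proposal does not prove the lemma.
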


\begin{proof}
The {\em distance} of two vertices is the length of the shortest path
between them.  Thus $G$ has girth ${}>g$ if for any edge $uv$ of $G$, after
deleting that edge the distance of $u$ and $v$ is at least $g$.

We construct the claimed $\pi$-graph by adding a 1-factor to a large cycle
on $2n$ vertices.  The initial graph has vertices $a_i$, $b_i$ for
$i=1,\dots,n$ and edges $a_i$\cto$b_i$ and $a_i$\cto$b_{i+1}$, where indices
are understood modulo $n$.  A new edge can be added between the 2-degree
vertices $a_i$ and $b_j$ if their distance is at least $g$.  For a 2-degree
vertex $v$ the number of vertices with distance $<g$ from $v$ is at most
$$
    1+2+4+\cdots+2^{g-1}<2^g .
$$
Consequently one can add the next edge in the 1-factor in a greedy way until
the number of (unmatched) 2-degree vertices among the $a_i$ vertices goes
below $2^g$.

At this point the graph has exactly $(2^g-1)+(2^g-1)$ 2-degree vertices, and
girth ${}>g$. Next find (circular) intervals $I_s$ of the indices
$1,\dots,n$ for $1\le s < 2^g$ with the following properties:
\begin{enumerate}
\item $I_s$ contains $2^g+1$ indices;
\item if $i\in I_s$, then both $a_i$ and $b_i$ are at distance $\ge g$ from
any 2-degree vertex;
\item if $i\in I_s$ and $j\in I_t$, $s\not= t$, then $a_i$ and $b_j$ has
distance $\ge g$.
\end{enumerate}
After picking $s$ intervals, the number of indices which cannot be the
midpoint of the next interval $I_{s+1}$ is at most
$$
    2(2^g+s\cdot(2^g+1))\,2^g\,(2^g+1).
$$
Consequently, if $n>2^{4g+3}$ then one can find such intervals. From each
interval $I_s$ pick two indices $u_s < \ell_s$ such that the distance
between the vertices $b_{u_s}$ and $a_{\ell_s}$ is at least $g$. As the
interval has length $2^g+1$, such a pair always exists. By construction,
all distances between the vertices in the set
$$
   \{a_{\ell_s},b_{u_s} \,:\, 1\le s < 2^g\}
$$
are at least $g$, moreover any of them is at distance $\ge g$ from any
2-degree vertex.

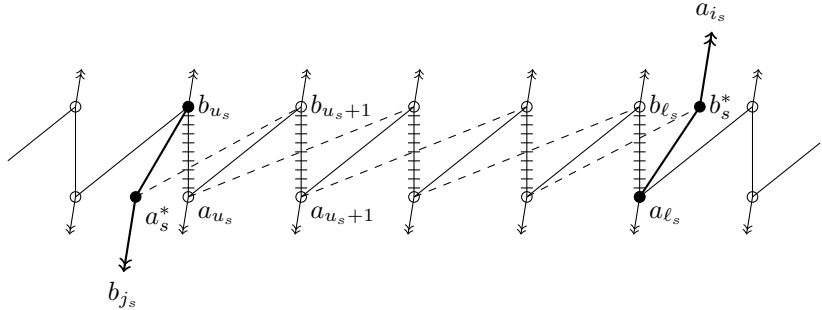
\begin{figure}[!ht]\begin{center}
\begin{tikzpicture}
\foreach \x in {1,...,7}{
\coordinate (a\x) at (1.5*\x,1.2);
\coordinate (b\x) at (1.5*\x,0);
}
\coordinate (astar) at (1.5*6+0.8,1.2);
\coordinate (bstar) at (1.5+0.8,0);
\draw (bstar) node [below right] {$a^*_s$}
      (astar) node [right] {$b^*_s$};
\draw (b2) node [below right] {$a_{u_s}$}
      (b3) node [below right] {$a_{u_s+1}$}
      (a2) node [right] {$b_{u_s}$}
      (a3) node [right] {$b_{u_s+1}$}
      (b6) node [below right] {$a_{\ell_s}$}
      (a6) node [right] {$b_{\ell_s}$};
\foreach \x in {1,...,7}{
\draw (a\x) circle (2pt);
\draw[->>] (a\x) -- +(0.08,0.5);
\draw (b\x) circle (2pt);
\draw[->>] (b\x) -- +(-0.08,-0.5);
}
\draw[->>,thick] (bstar) -- +(-0.16,-1.0) node [below] {$b_{j_s}$};
\draw[->>,thick] (astar) -- +(0.16,1.0) node [above] {$a_{i_s}$};
\draw[fill] (bstar) circle(2pt) (astar) circle(2pt)
    (a2) circle(2pt) (b6) circle(2pt);
\draw[thin, dashed] (bstar)--(a3) (b2)--(a4)
  (b3)--(a5) (b4)--(a6) (b5)--(astar);
\draw (a1)--(b1)--(a2) (b2)--(a3) (b3)--(a4) (b4)--(a5)
    (b5)--(a6) (b6)--(a7)--(b7)--+(0.6*1.5,0.6*1.2);
\draw (a1)--+(-0.6*1.5,-0.6*1.2);
\draw[thick] (bstar)--(a2) (astar)--(b6);
\foreach \x in {2,...,6}{
\draw (a\x)--(b\x);
\foreach \y in {1,...,9}{
    \draw[thin] (a\x) ++(-0.08,-\y*0.12) -- +(+0.16,0);
}
}
\end{tikzpicture}

\kern -20pt\end{center}
\caption{Swapping edges between $a^*_s$ and $b^*_s$ in $G$}\label{fig:graph1}
\end{figure}
Let the 2-degree vertices be $a_{i_s}$ and $b_{j_s}$ where $s$ runs from $1$
to $2^g-1$.  Add $2(2^g-1)$ new vertices and some new edges to the graph. 
The new vertices are $a^*_s$ and $b^*_s$, and the new edges are
$b_{j_s}$\cto$a^*_s$, $a^*_s$\cto$b_{u_s}$, and $a_{i_s}$\cto$b^*_s$,
$b^*_s$\cto$a_{\ell_s}$, see Figure \ref{fig:graph1}.  In this new graph $G$
the new vertices $a^*_s$ and $b^*_s$ have degree 2; vertices $b_{u_s}$ and
$a_{\ell_s}$ have degree 4, all other vertices have degree 3.  $G$ still has
girth $>g$.  Indeed, any cycle without the new vertices has length $>g$.  As
new vertices have degree 2, if a cycle contains $a^*_s$, then it also
contains $b_{j_s}$ and $b_{u_s}$; similarly for $b^*_s$.  A short cycle
cannot contain a single new vertex $a^*_s$ (or $b^*_s$) only, as $b_{j_s}$
and $b_{u_s}$ are far from each other.  So the short cycle contains (at
least) two new vertices connected by a short path, and this short path can
only be $a^*_s$\cto$b_{j_s}$--$\,\cdots$--\,$a_{i_t}$\cto$b^*_t$, or
$a^*_s$\cto$b_{j_s}$--$\,\cdots$--$b_{j_t}$\cto$a^*_t$, or
$b^*_s$\cto$a_{i_s}$--$\,\cdots$--$a_{i_t}$\cto$b^*_t$.  Any two of the
endpoints of these paths are far from each other, thus one cannot merge them
into a short cycle.

Now $G$ is bipartite, has girth $>g$, but it is not 3-regular. To make it
3-regular, we make the following changes.  Delete the edges between $a_i$
and $b_i$ for $u_s\le i\le \ell_s$ (after this $a_{\ell_s}$ and $b_{u_s}$
will have degree 3), and add the edges $a^*_s$\cto$b_{u_s+1}$,
$a_{\ell_s-1}$\cto$b^*_s$, and the edges $a_{i-1}$\cto$b_{i+1}$ for
$u_s<i<\ell_s$ as depicted on Figure \ref{fig:graph1}.  Denote this new
graph by $G^*$.

If $G^*$ has a cycle, then the newly added (dashed) edges in the cycle can
be replaced by three edges of $G$.  This way the cycle cannot collapse, and
after deleting edges traversed back and forth, the cycle in $G$ has at least
as many edges as it had in $G^*$, but not more than three times as much. 
Consequently, the girth of $G^*$ is bigger than $g/3$.

Summing up, for any $n>2^{4g+3}$ we have constructed a $\pi$-graph on
$2n+2(2^g-1)$ vertices which has girth bigger than $g/3$.  Thus we have
proved the lemma with the function $N(g)=2^{12g+4}$.  \end{proof}

Before continuing, let us show how Lemma \ref{lemma:c1} can be used to find
a large girth graph in the family $\mathcal G_3$. Fix a $\pi$-graph $G$ on
$2n$ vertices which has girth $>g$ as given by Lemma \ref{lemma:c1}. Let the
vertices of $G$ be labeled as $a_i$, $b_i$ such that $b_1$, $a_1$, $b_2$,
$a_2$, \dots, $b_n$, $a_n$ is a cycle, and the additional 1-factor is given
by the edges $a_i$\cto$b_{\pi(i)}$.

Let $m\ge 5$. The vertices of $H$ are $a_{i,j}$ and $b_{i,j}$ where $1\le
i\le n$ and $1\le j\le m$, the indices are understood modulo $n$ and $m$,
respectively.  $a_{i,j}$ is connected to $b_{i,j}$, $b_{i+1,j}$ and
$b_{\pi(i),j+1}$.  For fixed $j$, the vertices $a_{i,j}$ and $b_{i,j}$ form
a cycle on $2n$ vertices -- thus an instance of $\mathcal G_2$; and the
edges $a_{i,j}$\cto$b_{\pi(i),j+1}$ form a 1-factor between the independent
sets $A_j$ and $B_{j+1}$.  Consequently $H$ is a $\mathcal G_3$-graph.  The
map $\phi(a_{i,j})=a_i$, $\phi(b_{i,j})=b_j$ is a graph homomorphism from
$H$ to $G$, which maps any cycle in $H$ into a cycle in $G$.  (As $\phi$ is
onto, and both $H$ and $G$ are 3-regular, in the image no edge is traversed
immediately backward.) As $G$ has girth $>g$, $H$ has girth $>g$ as well, as
required.

\smallskip

To formalize the above construction, we introduce the following notation. 
Let $G$ be a graph on even number of vertices, $A$ and $B$ be the equal size
partition of the vertices, and $\pi: A\to B$ be a one-to-one mapping.  For
any $m>1$ the graph $\mathcal H(m,G,\pi)$ consists of $m$ disjoint copies of
$G$ labeled as $G^1$, $G^2$, \dots, $G^m$ with $G^{m+1}=G^1$.  There is an
additional 1-factor between $A_j$ of $G^j$ and $B_{j+1}$ of $G^{j+1}$
determined by the mapping $\pi$: the vertex $a_j\in A_j$ is connected to
$\pi(a)_{j+1}\in B_{j+1}$.  The above graph $H$ is just $\mathcal
H(m,C_{2n},\pi)$, where $\pi$ is the map from Lemma \ref{lemma:c1}.

\smallskip

For handling larger degree $d$ the following stronger claim will be proved
by induction of $d$.

\begin{lemma}\label{lemma:graph-induction}
For each $g>3$ there exists a $d$-regular graph $G_d\in \mathcal G_d$ with
independent vertex sets $A_d$, $B_d$, and a one-to-one map $\pi_d: A_d \to
B_d$ such that for all $m>1$ the graph $\mathcal H(m,G_d,\pi_d)$ has girth
$>g$.
\end{lemma}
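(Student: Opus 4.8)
\textbf{Proof plan for Lemma \ref{lemma:graph-induction}.}
The plan is to induct on $d$, with the base case $d=2$ already handled by Lemma \ref{lemma:c1}: that lemma produces, for each target girth bound, a $\pi$-graph on $2n$ vertices, i.e.\ exactly a cycle $C_{2n}\in\mathcal G_2$ together with a one-to-one map $\pi_2\colon A_2\to B_2$ such that $\mathcal H(m,C_{2n},\pi_2)$ has large girth for every $m$ (this is the construction $H=\mathcal H(m,C_{2n},\pi)$ spelled out just before the statement, and the homomorphism $\phi$ argument shows the girth bound holds uniformly in $m$). So I may assume I already have a $d$-regular $G_d\in\mathcal G_d$ with independent sets $A_d,B_d$ and a bijection $\pi_d\colon A_d\to B_d$ such that $\mathcal H(m,G_d,\pi_d)$ has girth $>g'$ for all $m>1$, where $g'$ is chosen large enough (to be fixed at the end, accounting for the factor-of-3 loss that will appear, exactly as in Lemma \ref{lemma:c1}).

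The key idea is to run the $\pi$-graph construction of Lemma \ref{lemma:c1} \emph{one level up}, using $G_d$ in place of the cycle. Concretely: set $G_{d+1}=\mathcal H(n_{d+1},G_d,\pi_d)$ for a suitably large $n_{d+1}$; by the induction hypothesis this has girth $>g'$ and by definition it lies in $\mathcal G_{d+1}$, with independent sets $A_{d+1}=\bigcup_j A_d^j$ and $B_{d+1}=\bigcup_j B_d^j$. What remains is to construct a one-to-one map $\pi_{d+1}\colon A_{d+1}\to B_{d+1}$ so that adding the corresponding 1-factor — i.e.\ passing to $G^*_{d+1}$, the graph with the extra edges $v\cto\pi_{d+1}(v)$ — keeps the girth large; then the homomorphism argument (collapsing the $m$ copies of $G_{d+1}$ onto $G^*_{d+1}$, using $(d{+}1)$-regularity of both so that no edge is traversed back and forth in the image) upgrades this to: $\mathcal H(m,G_{d+1},\pi_{d+1})$ has girth $>g$ for every $m>1$, which is the claim. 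The construction of $\pi_{d+1}$ is itself a greedy argument in the spirit of the proof of Lemma \ref{lemma:c1}: since $G_{d+1}$ has bounded degree $d+1$, a ball of radius $g'$ around any vertex has size at most some $D=D(d,g')$, so one can greedily match vertices of $A_{d+1}$ to far-away vertices of $B_{d+1}$ as long as there are more than $D$ unmatched vertices; the leftover $O(D)$ unmatched vertices are then handled by the same ``interval'' trick (find far-apart intervals, splice in auxiliary degree-2 vertices $a^*_s,b^*_s$ to route the remaining matching edges through long detours, then make everything $(d{+}1)$-regular by deleting/rerouting a few local edges), at the cost of shrinking the girth guarantee by a constant factor and adding $O(D)$ vertices. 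Choosing $n_{d+1}$ large enough makes all these local surgeries possible; choosing $g'=3g$ (or whatever the surgery loses) at the start of the induction absorbs the loss.

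The main obstacle I anticipate is not the greedy matching but the interval surgery: in Lemma \ref{lemma:c1} the ``1-factor'' being completed is inside a single cycle, where it is geometrically transparent which pairs of intervals are mutually far; for general $G_d$ (and then $G_{d+1}$) one has to re-verify that far-apart intervals exist in the required number, that the degree-2 auxiliary vertices $a^*_s,b^*_s$ can be inserted and then eliminated by local rerouting \emph{within} a copy of $G_d$ without creating short cycles or disturbing $(d{+}1)$-regularity, and — crucially — that the whole modified $G^*_{d+1}$ is still the ``collapse'' of some honest $\mathcal H(m,\cdot,\cdot)$, i.e.\ that the modifications are symmetric enough to descend to a single map $\pi_{d+1}$ applied uniformly across the $m$ copies. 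I expect this forces the auxiliary vertices and rerouting to be performed \emph{at the $G_d$ level}, so that $\pi_{d+1}$ acts on a slightly enlarged $G_d$ (call it $G_d'$, still $d$-regular, still in $\mathcal G_d$, still with the large-girth-under-$\mathcal H$ property), and only afterwards one forms $G_{d+1}=\mathcal H(n_{d+1},G_d',\pi_{d+1})$. Making this bookkeeping precise — tracking what "in $\mathcal G_d$'' and "bipartite with equal parts'' survive each surgery — is where the real work lies; the counting bounds ($2^{\text{something}\cdot g}$ type thresholds on $n_{d+1}$) are routine once the combinatorial picture is set up correctly.
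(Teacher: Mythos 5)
Your plan diverges from the paper at the inductive step, and the place where it diverges is exactly the place you yourself flag as unresolved --- so this is a genuine gap, not just a different route. You propose to re-run the whole machinery of Lemma \ref{lemma:c1} (greedy matching, then the interval/auxiliary-vertex surgery) one level up, on $G_{d+1}$ in place of the cycle. But that surgery is tied to the cyclic adjacency structure of $C_{2n}$: it deletes the edges $a_i\cto b_i$ on an interval and reroutes via $a_{i-1}\cto b_{i+1}$, and it splices in extra degree-2 vertices. For a general member of $\mathcal G_{d+1}$ there is no canonical matching to delete, and --- worse --- the modified graph must still decompose as $\ge 5$ copies of a $\mathcal G_d$ graph cyclically joined by 1-factors, i.e.\ remain in $\mathcal G_{d+1}$, which adding vertices and local rerouting destroys. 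Your fallback of performing the surgery ``at the $G_d$ level'' to get an enlarged $G_d'$ is circular: you would then need $G_d'$ to again satisfy the full inductive hypothesis (that $\mathcal H(m,G_d',\pi_d')$ has girth $>g'$ for \emph{all} $m$), which is precisely the statement being proved and is not re-established for the surgered graph. None of the counting is the issue; the combinatorial completion step simply does not go through as described.

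The idea you are missing is that the paper never redoes the surgery above level $2$. Instead it \emph{linearizes} $G_{d+1}$: setting $t=3|G_d|$, the vertices of $G_{d+1}=\mathcal H(m_d,G_d,\pi_d)$ can be enumerated as $v_1,\dots,v_{N_{d+1}}$ (copy by copy, alternating the two independent sets) so that every edge of $G_{d+1}$ joins indices at circular distance at most $t$. Then one invokes Lemma \ref{lemma:c1} once more, on the \emph{cycle} $C_{N_{d+1}}$ but with the amplified girth parameter $gt$, and lets $\pi_{d+1}$ be the resulting 1-factor transported to $G_{d+1}$ via the labeling. A cycle of length $\le g$ in $\mathcal H(m,G_{d+1},\pi_{d+1})$ that leaves a single copy then maps to a closed walk of length $\le gt$ in the $\pi$-graph of Lemma \ref{lemma:c1} (each intra-copy edge becomes a path of $\le t$ cycle edges, each $\pi_{d+1}$-edge becomes a $\pi$-edge), contradicting its girth $>gt$; cycles inside one copy are handled by the induction hypothesis. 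This reduction to the already-proved cycle case is the whole content of the inductive step, and without it (or a worked-out substitute for the higher-level surgery) your argument is incomplete.
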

\begin{proof}
For $d=2$ the graph $G_2$ is the cycle on $N_2=2n\approx 2\cdot2^{12g+4}$
vertices, and $\pi_2$ is the map given by Lemma \ref{lemma:c1}.  It
satisfies the claim of the lemma as discussed above.

Suppose we have the graph $G_d$ on even number of vertices and the map
$\pi_d$ as in the Lemma, and want to find $G_{d+1}$ and $\pi_{d+1}$.  Let
$t=3|G_d|$, and apply Lemma \ref{lemma:c1} to the girth $gt$ to get an
$N_{d+1}\ge 5$ which is a multiple of $|G_d|$, and the map $\pi$ on
$1,\dots,N_{d+1}$, which is a one-to-one map between the odd and even
indices.  Let $m_d=N_{d+1}/|G_d|$.  This is $\ge 5$, thus $G_{d+1}=\mathcal
H(m_d,G_d,\pi_d)$ is in $\mathcal G_{d+1}$, and has $N_{d+1}$ vertices. 
These vertices can be labeled as $v_i$ for $1\le i \le N_{d+1}$ such that
\begin{enumerate}
\item[a)] the two independent sets of $G_{d+1}$ are the odd-indexed 
vertices and the even-indexed vertices;
\item[b)] if $v_i$\cto$v_j$ is an edge
in $G_{d+1}$, then the circular distance between $i$ and $j$ (that is, the 
smaller of $|i-j|$ and $N_{d+1}-|i-j|)$) is at most $t$.
\end{enumerate}
Indeed, as $t=3|G_d|$, enumerate the vertices in the first copy of $G_d$
first, then the vertices in the second copy, and so on, making sure that
the independent sets get the even and odd indices, respectively. Finally let
$\pi_{d+1}$ be the map between the odd and even numbers between $1$ and
$N_{d+1}$ as induced by the map $\pi$ of Lemma \ref{lemma:c1}.

We claim that $G_{d+1}$ and $\pi_{d+1}$ satisfies the conditions of the
lemma.  The only non-trivial case is that $H=\mathcal
H(m,G_{d+1},\pi_{d+1})$ has girth $>g$ for every $m>1$.  Consider a (short)
cycle in $H$.  If this cycle has vertices from the same copy of $G_{d+1}$
only, then it has length $>g$ as $G_{d+1}$ has girth $>g$ by induction.

Consequently the cycle must have vertices in different copies of $G_{d+1}$.
Let $v^j_{i_1}$ and $v^j_{i_2}$ be neighbor points in the cycle in the
$j$-th copy, where $1\le i_1,i_2\le N_{d+1}$.  By construction, the circular
distance of $i_1$ and $i_2$ is at most $t$, thus the
$v^j_{i_1}$\cto$v^j_{i_2}$ edge can be replaced by at most $t$ edges now in
the cycle $v_1,v_2,\dots,v_{N_{d+1}-1},v_{N_{d+1}}$.  But it means that the
graph returned by Lemma \ref{lemma:c1} has a cycle of length $\le gt$, which
is a contradiction.
\end{proof}

The main theorem of this section is a simple corollary of Lemma
\ref{lemma:graph-induction}.

\begin{theorem}\label{thm:last}
There are arbitrary large girth graphs in $\mathcal G_d$.
\end{theorem}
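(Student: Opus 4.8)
The plan is to read Theorem \ref{thm:last} off directly from Lemma \ref{lemma:graph-induction}, with only a one-line extra observation. Fix a target girth bound $g$. I would first dispose of the trivial range: every graph in $\mathcal G_d$ is bipartite, hence has girth at least $4$, and $\mathcal G_d$ is non-empty by construction, so for $g\le 3$ any member of $\mathcal G_d$ already works. Thus assume $g>3$ and apply Lemma \ref{lemma:graph-induction} to this $g$ to obtain a $d$-regular graph $G_d\in\mathcal G_d$ together with its independent vertex classes $A_d,B_d$ and a bijection $\pi_d\colon A_d\to B_d$ such that $\mathcal H(m,G_d,\pi_d)$ has girth $>g$ for every $m>1$.

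The only remaining step is to transfer the girth bound from $\mathcal H(m,G_d,\pi_d)$ back to $G_d$ itself. I would fix any $m>1$ (say $m=5$) and recall that, by the definition of $\mathcal H(m,G_d,\pi_d)$, this graph is $m$ disjoint copies $G^1,\dots,G^m$ of $G_d$ together with 1-factors inserted only between consecutive copies; in particular each $G^j$ is an induced subgraph isomorphic to $G_d$. Hence any cycle in $G_d$ occurs as a cycle of the same length in $\mathcal H(m,G_d,\pi_d)$, so $\mathrm{girth}(G_d)\ge\mathrm{girth}\big(\mathcal H(m,G_d,\pi_d)\big)>g$. Since $G_d\in\mathcal G_d$, this is exactly the assertion. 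Running over all $g$ produces an infinite family of graphs in $\mathcal G_d$ with unbounded girth.

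I do not expect any genuine obstacle here: the substantive work has already been carried out in Lemma \ref{lemma:graph-induction} (and, underneath it, in Lemma \ref{lemma:c1} and the inductive $\mathcal H$-construction), and the step above uses nothing beyond monotonicity of girth under taking subgraphs. As an even shorter alternative, I would note that for $d\ge 3$ the graph $G_d$ is literally constructed as $\mathcal H(m_{d-1},G_{d-1},\pi_{d-1})$ with $m_{d-1}>1$ in the induction step of Lemma \ref{lemma:graph-induction}, so $\mathrm{girth}(G_d)>g$ is immediate from the inductive statement, and for $d=2$ the graph $G_2$ is a cycle on about $2\cdot 2^{12g+4}$ vertices, whose girth equals its vertex count and is therefore $>g$; either route finishes the proof.
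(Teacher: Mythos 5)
Your proof is correct and is essentially identical to the paper's: both deduce Theorem \ref{thm:last} from Lemma \ref{lemma:graph-induction} by observing that $G_d$ sits inside $\mathcal H(m,G_d,\pi_d)$ as a (spanned) subgraph, so its girth is at least that of $\mathcal H(m,G_d,\pi_d)$, which exceeds $g$. The extra remarks (the $g\le 3$ case and the alternative reading of $G_d$ as itself an $\mathcal H$-construction) are harmless refinements of the same one-line argument.
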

\begin{proof}
As $H=\mathcal H(m,G_d,\pi_d)$ has girth $>g$, and $G_d$ is a spanned subgraph
of $H$, $G_d$ must have girth $>g$ as well.
\end{proof}

\section{Conclusion}\label{sec:conclusion}

Using the entropy method, it was shown that the general upper bound
$(d+1)/2$ on the complexity of graph based secret sharing schemes, known as
Stinson's bound, is tight for a large class of inductively defined
$d$-regular bipartite graphs.  The class $\mathcal G_d$ contains the graphs
defined by van Dijk \cite{dijk} and proved to be tight by Blundo et al
\cite{bssv}.

In Section \ref{graph_sec} it was proved that each $\mathcal G_d$ contains
graphs of arbitrary large girth using combinatorial arguments.  This result
refutes the widely believed conjecture that large girth graphs have bounded
complexity -- due to the exponentially diminishing interaction between the
shares assigned to the vertices.

\smallskip

The construction of the graph family $\mathcal G_d$ in Section
\ref{const_sec} is not the most general one. When connecting instances of
$G_d$ by 1-factors as indicated on Figure \ref{fig:structure}, the instances
$G^i_d$ need not be isomorphic. The construction uses that they are of equal
size, and the proof uses only that all of them are from $\mathcal G_d$.
Stinson's bound is tight for graphs in this extended family of bipartite
$d$-regular graphs.

While graphs in $\mathcal G_d$ can have arbitrary large girth, they are
highly connected: between any two vertices there are $d$ edge-disjoint
paths. High connectivity, however, is compatible with low complexity: the
complete graph $K_n$ has complexity 1.

\smallskip

The proof of Theorem \ref{thm:last} stating that $\mathcal G_d$ contains
graphs of girth $>g$, can be used to estimate the {\em size} of this graph.
$N_2 \approx g$ as $\mathcal G_2$ contains cycles; $N_3 \approx
12\cdot2^{12g+4}$ by Lemma \ref{lemma:c1}; and the inductive construction of
Lemma \ref{lemma:graph-induction} implies $N_{d+1} \approx
12\cdot2^{36\,g\,N_d}$, a really huge number. A $d$-regular graph with girth
$>g$ must contain at least $d^g$ vertices, and there are such graphs with
essentially that many vertices. It seems plausible that the family $\mathcal
G_d$ contains a $>g$ girth graph with about that much vertices. We leave to
prove (or refute) this claim as an open problem.

\section*{Acknowledgment}

This research was partially supported by the Lend\"ulet program of the
Hungarian Academy of Sciences.  The authors thank the members of the Crypto
Group of the R\'enyi Institute, and especially G\'abor Tardos, the numerous
insightful discussions on the topic of this paper.


\begin{thebibliography}{99}

\bibitem{beimel}
A.~Beimel,
\newblock \emph{Secret-sharing schemes: a survey},
in: \newblock Y.~M.~Chee et al (eds), Coding and Cryptology, IWCC 2011
\newblock LNCS vol \textbf{6639}, Springer

\bibitem{bssv6}
C.~Blundo, A.~De Santis, R.~De Simone, U.~Vaccaro,
\newblock \emph{Graph decomposition and secret sharing schemes}, J.
Crypt. \textbf{8} (1995) pp. 39--64.


\bibitem{bssv}
C.~Blundo, A.~De Santis, R.~De Simone, U.~Vaccaro,
\newblock \emph{Tight bounds on the information rate of secret sharing
schemes}, Des., Codes and Crypt. \textbf{11} (2) (1997) pp. 107--110.


\bibitem{bs}
E. F.~Brickell, D.~R.~Stinson,
\newblock \emph{Some improved bounds on the information
rate of perfect secret sharing schemes}, J. Crypt.
\textbf{5} (1992) pp. 153--166.

\bibitem{csirm_lower}
L.~Csirmaz,
\newblock \emph{The size of a share must be large}, J. Crypt. \textbf{10} (4)
(1997) pp. 223--231.


\bibitem{csirmcube}
L.~Csirmaz, \textit{Secret sharing schemes on graphs}, Studia Math. Hung.
\textbf{44} (2007), 297--306.


\bibitem{csl}
L.~Csirmaz, P.~Ligeti,
\newblock \emph{On an infinite family of graphs with
information ratio $2-1/k$}, Computing \textbf{85} (1) (2009) pp. 127--136.

\bibitem{cslt}
L.~Csirmaz, P.~Ligeti, G.~Tardos,
\newblock \emph{Erd\H os-Pyber theorem for hypergraphs and secret sharing},
Graphs and Combinatorics \textbf{31} (5) (2015) pp. 1335--1346


\bibitem{cst}
L.~Csirmaz, G.~Tardos,
\newblock{Optimal Information Rate of Secret Sharing Schemes on Trees}, IEEE
Trans. on Inf. Theory \textbf{59} (4) (2013) pp. 2527--2530.

\bibitem{dijk}
M.~van Dijk,
\newblock \emph{On the information rate of perfect secret sharing schemes,}
Des., Codes and Crypt. \textbf{6} (2) (1995) pp. 143--169.

\bibitem{lll}
P.~Erdos, L.~Lovasz,
\newblock \emph{Problems and results on 3-chromatic hypergraphs and some
related questions}, in: \emph{Infinite and Finite Sets}, volume 11 of
Colloq. Math. Soc. J. Bolyai (1975) pp. 609--627,


\bibitem{ep}
P.~Erdos, L.~Pyber,
\newblock \emph{Covering a graph by complete bipartite graphs}, Disc. Math.
\textbf{170} (1-3) (1997) pp. 249--251.

\bibitem{stinson}
D.~R.~Stinson,
\newblock \emph{Decomposition construction for secret sharing schemes,}
IEEE Trans. on Inf. Theory, \textbf{40} (1) (1994) pp. 118--125.

\end{thebibliography}
\end{document}